\documentclass[
aps,
prx,
twocolumn,
superscriptaddress,
showpacs,
final,
floatfix,
longbibliography,
]{revtex4-2}
\usepackage[utf8]{inputenc}
\usepackage[toc,page]{appendix}
\usepackage{amsfonts}
\usepackage[dvips]{graphicx}
\usepackage{amsmath}
\usepackage{amssymb}
\usepackage{hyperref}
\usepackage{color}
\usepackage{mathrsfs}
\usepackage{isomath}
\usepackage{amsthm}
\usepackage{epstopdf}
\usepackage{txfonts}
\usepackage{dsfont}
\usepackage{ulem}
\usepackage{physics}
\allowdisplaybreaks[4]
\usepackage{bbold}
\usepackage{mathtools}
\usepackage{xcolor}
\usepackage{bm}
\usepackage{orcidlink}
\usepackage{algorithm}
\usepackage{algpseudocode}
\usepackage{listings}
\usepackage{lipsum}

\renewcommand{\tr}{\mathrm{tr}}

\renewcommand{\emph}[1]{{\it #1}}

\newcommand{\be}{\begin{equation}}
\newcommand{\ee}{\end{equation}}

\def\bea#1\eea{%
  \begin{align}%
  #1%
  \end{align}%
}

\newcommand{\id}{\mathbb{1}}
\newcommand{\idmap}{{\rm id}}          

\newcommand{\muct}{\mu_{\rm ct}}       
\newcommand{\mudt}{\mu_{\rm dt}}       
\newcommand{\sigmact}{\sigma_{\rm ct}} 
\newcommand{\sigmadt}{\sigma_{\rm dt}} 

\newcommand{\ZZ}{\mathcal Z}

\providecommand{\varrho}{{\mathrm P}}

\newtheorem{theorem}{Theorem}
\newtheorem*{theorem*}{Theorem}

\usepackage{cleveref}
\crefname{equation}{Eq.}{Eqs.}
\crefname{figure}{Fig.}{Figs.}
\crefname{observation}{Obs.}{Obs.}
\crefname{corollary}{Corollary}{Corollaries}
\crefname{lemma}{Lemma}{Lemmata}
\crefname{proof}{Proof}{Proofs}
\creflabelformat{proof}{#2proof#3}
\crefname{remark}{Remark}{Remarks}
\crefname{prop}{Proposition}{Propositions}
\crefname{appendix}{Appendix}{Appendix}

\AtBeginEnvironment{appendices}{\crefalias{section}{appendix}}

\usepackage{hyperref}
\hypersetup{citecolor=blue}
\hypersetup{colorlinks=true}
\hypersetup{linkcolor=blue}
\hypersetup{urlcolor=blue}

\definecolor{BO}{RGB}{105, 0, 255}

\begin{document}
\title{Least Variable Quantum Counting Processes}

\author{Bita Olamaei\,\orcidlink{0009-0008-4275-5117}}
\thanks{Equal contributions}
\email{bita.olamaei@tuwien.ac.at}
\affiliation{Atominstitut, Technische Universit\"{a}t Wien, Stadionallee 2, 1020 Vienna, Austria}
\affiliation{Vienna Center for Quantum Science and Technology, Technische Universit\"{a}t Wien, 1020 Vienna, Austria}

\author{Florian Meier\,\orcidlink{0000-0003-4337-3846}}
\thanks{Equal contributions}
\email{florianmeier256@gmail.com}
\affiliation{Atominstitut, Technische Universit\"{a}t Wien, Stadionallee 2, 1020 Vienna, Austria}
\affiliation{Vienna Center for Quantum Science and Technology, Technische Universit\"{a}t Wien, 1020 Vienna, Austria}
\affiliation{Institut für theoretische Physik, Technische Universit{\"a}t Wien, 1040 Vienna, Austria}

\author{Costantino Budroni\,\orcidlink{0000-0002-6562-7862}}
\email{costantino.budroni@unipi.it}
\affiliation{Department of Physics “E. Fermi”, University of Pisa, Largo B. Pontecorvo 3, 56127 Pisa, Italy}

\author{Pharnam Bakhshinezhad\,\orcidlink{0000-0002-0088-0672}}
\email{pharnam.bakhshinezhad@tuwien.ac.at}
\affiliation{Atominstitut, Technische Universit\"{a}t Wien, Stadionallee 2, 1020 Vienna, Austria}
\affiliation{Vienna Center for Quantum Science and Technology, Technische Universit\"{a}t Wien, 1020 Vienna, Austria}

\author{Giuseppe Vitagliano\,\orcidlink{0000-0002-5563-3222}}
\email{giuseppe.vitagliano@tuwien.ac.at}
\affiliation{Atominstitut, Technische Universit\"{a}t Wien, Stadionallee 2, 1020 Vienna, Austria}
\affiliation{Vienna Center for Quantum Science and Technology, Technische Universit\"{a}t Wien, 1020 Vienna, Austria}

\begin{abstract}
    Counting processes provide a fundamental description of stochastic events ranging from photon detection to clock ticks. 
    A central question is how accurately such events can be timed when only finite memory resources are available. 
    Here, we investigate this problem within a general framework of finite-dimensional classical and quantum counting processes. 
    We derive a rigorous finite-memory variance bound obeyed by every classical $d$-state counting process, which is tight and 
    saturated by a discrete Erlang-type ladder process. 
    Through numerical optimization, we identify quantum counting processes that violate this classical bound, achieving smaller 
    first-tick fluctuations than any classical process with the same memory size and mean tick time. 
    For the qubit case, we further derive an analytical large-mean bound within a single-Kraus no-tick family, showing that the 
    quantum advantage persists asymptotically within this class. 
    The optimized quantum processes exhibit coherent conditioned dynamics and approach a continuous-time quantum-jump description 
    as the mean increases. 
    Our results establish a finite-memory quantum advantage in temporal precision and connect discrete-time counting processes 
    with continuous-time quantum timekeeping.
\end{abstract}

\maketitle 

\section{Introduction}
\label{sec:intro}

\begin{figure*}
    \centering
    \includegraphics[width=\linewidth]{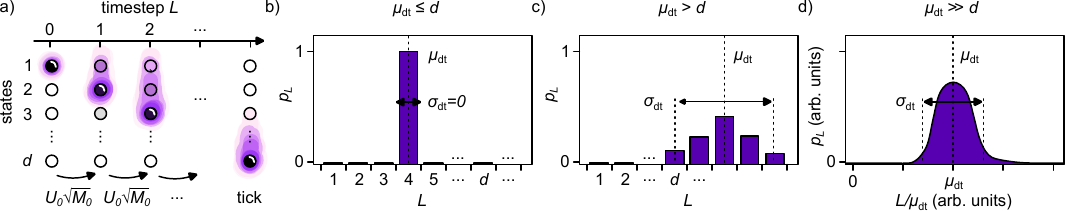}
    \caption{Conceptual overview of the setup.
    (a) Discrete-time evolution of a finite-dimensional quantum system is considered.
    The waiting time probability distribution is shown as a function of the number of timesteps $L$ in the panels b-d.
    (b) For memory size $d$, any number of counts up and including $d$ can be deterministically counted.
    (c) For a mean tick time $\mudt$ strictly greater than $d$, the tick distribution necessarily becomes probabilistic.
    (d) In the limit of $\mudt\rightarrow\infty$, the discrete-time process converges to a continuous-time one with continuous waiting time distribution.}
    \label{fig:ConceptualPanel}
\end{figure*}

The simplest counting machine has access to only a single bit of memory. 
Such a machine can {\it perfectly count} to at most two, i.e., perfectly distinguish two successive stages. 
A first step flips the bit from $0$ to $1$, the second step returns the bit back to $0$, making the machine output a \textit{tick} signaling the completion of a two-step cycle. 
Deterministically counting three or more steps is, however, impossible with a single bit. 
Nevertheless, the same machine can produce a tick after three steps on average by advancing from one stage to the next only probabilistically, for example, with probability $2/3$ at each step.
Completing each stage, then, requires on average $3/2$ steps, thereby, producing a tick, on average, after three steps. 
The price for producing longer waiting times with the same finite memory is that the tick time becomes stochastic: as the number of steps to be counted increases, its probability distribution necessarily broadens. 
Achieving a sharper tick distribution, and hence higher precision, therefore requires additional memory. 
To meaningfully quantify this internal memory, we consider {\it autonomous machines}, whose dynamics do not depend on an external input that could itself encode information about the elapsed time \cite{Erker2017,Malabarba2015,Woods2021,Woods2022,Silva2023,Campbell2026}.

Counting processes arise in a broad range of applications, including molecular mechanics~\cite{Barato2017,Duso2020}, financial stochastic processes~\cite{Cohen2015}, and queuing theory~\cite{Erlang1917,cox1962renewal,Whitt1981}. 
From a more fundamental perspective, counting processes also provide a framework for addressing the question of distinguishing classical from quantum temporal correlations, 
for example through extensions of the notion of {\it macrorealism}~\cite{leggett_quantum_1985,leggett_realism_2008,emary_leggettgarg_2013}
to hidden variable models with bounded disturbance~\cite{budroni2019_memorycost,Budroni2021Ticking-clock,Vitagliano2023}.

Here, we use internal memory as the central resource constraining the precision of the counting process, which we quantify using the variance $\sigmadt^2$ and mean $\mudt$ of its tick distribution (\cref{fig:ConceptualPanel}). 
For classical counters, it was conjectured that the variance is tightly lower-bounded by memory size and mean~\cite{Budroni2021Ticking-clock}. 
Here, we resolve this conjecture and rigorously prove that for an arbitrary memory dimension $d$, the variance obeys
\begin{equation}
    d \sigmadt^2 \geq \mudt (\mudt - d).
    \label{eq:main_result_intro}
\end{equation}
The bound is tight, and for $\mudt \geq d$, is saturated by a $d$-stage discrete Erlang ladder~\cite{Erlang1917,Cox1970,Getz2018}.

Quantum counters, in contrast, generate a temporal sequence of events by using quantum memory. 
Such systems have been investigated particularly in connection with timekeeping~\cite{Renner2017,Erker2017,Woods2019, Woods2022} and computation~\cite{Malabarba2015,Xuereb2023,MarinGuzman2024,Woods2024,Guzman2025} as well as with more general output sequences~\cite{budroni2019_memorycost,spee2020_simulating,Budroni2021Ticking-clock,vieira2022_temporal,Vitagliano2023,Weilenmann2024}. 
Given the same memory size, a quantum system can, in principle, attain a lower variance in the distribution of its ticks than a classical one~\cite{Woods2022,Budroni2021Ticking-clock}.
These results motivate the question of how the precision of counting processes in quantum systems is fundamentally constrained by the available memory, and how these constraints differ from the classical case.

To address this question, we develop a systematic optimization method for finite-dimensional quantum counting processes. 
Using this method, we numerically identify explicit quantum models that violate the classical variance bound, demonstrating a quantum advantage at a fixed memory dimension. 
We further analyze the structure of these optimized processes, comparing their dynamics with those of the classical Erlang process, which saturates the classical bound. 
Finally, for the two-dimensional case, we show that the optimization problem can be expressed in closed form.

In the large-mean regime, these optimized discrete-time quantum counting processes approach a continuous-time quantum master equation description~\cite{Budroni2021Ticking-clock,Woods2021}, which establishes the connection to autonomous quantum clocks~\cite{Erker2017,Woods2019,Woods2021,Woods2022,Silva2023}. 
In this limit, our classical result~\eqref{eq:main_result_intro} recovers the established linear bound on clock precision in terms of the memory dimension~\cite{AldousShepp1987,Woods2022}. 
For quantum counting processes, our numerical optimization reveals a superquadratic scaling of precision across the range of dimensions investigated. 
This behavior suggests a transient, finite-dimensional correction to the quadratic quantum bound derived in Ref.~\cite{Yang2020}.

Together, these results establish finite memory as a fundamental constraint on temporal precision and demonstrate how quantum dynamics can surpass classical limitations. 
Our findings thereby extend previously established bounds on precision in quantum clocks~\cite{Woods2022}. 
Memory thus complements constraints involving other resources---such as temporal resolution~\cite{Schwarzhans2021,Meier2023,Prech2025} and entropy production~\cite{Dost2023,Meier2025a}---that have been established in different settings. 

\section{Quantum counting processes: Framework}
\label{sec:count-process}
To compare classical counting processes with their quantum counterparts, we work within the framework of \textit{finite-state automata}~\cite{Kondacs1997,Say2014}.
Finite-state automata are time-homogeneous sequential devices that take an input from a set $\mathcal{X}$, update their internal state, and produce an output in a set $\mathcal{A}$.

Specifically, a \textit{classical automaton} of dimension $d$ is specified by an initial probability vector $\pi \in \mathbb{R}^d$ and a family of substochastic transition matrices $\{ T(a|x) \}_{a,x}$. Their entries satisfy
\begin{equation} 
    [T(a|x)]_{jk}\ge 0, 
    \qquad \sum_{a,k}[T(a|x)]_{jk} = 1 
    \quad \forall j,x,
\end{equation}
so that $\sum_a T(a|x)$ is a stochastic matrix (using right-stochastic matrices by convention).
Writing $\eta = (1, \dots, 1)^{\mathsf T}$, the probability of observing a sequence $a_1, \dots, a_n$ under inputs $x_1, \dots, x_n$ is
\begin{equation}
    p(a_1 \cdots a_n|x_1 \cdots x_n) = \pi^{\mathsf T} \, T(a_1|x_1) \cdots T(a_n|x_n) \,\eta.
\end{equation}

The analogous quantum model~\cite{Say2014,hoffmann2018,Arrighi2019,budroni2019_memorycost, vieira2022_temporal}, is specified by an initial density operator $\varrho_{\rm in}$ on a $d$-dimensional Hilbert space and, for each input $x$, a quantum instrument $\{\mathcal{I}_{a|x}\}_a$, i.e., a collection of completely positive maps such that $\sum_a \mathcal{I}_{a|x}$ is completely positive and trace-preserving (CPTP). 
The probability of a sequence of outputs is given by the Born rule:
\begin{equation}
    p(a_1 \cdots a_n|x_1 \cdots x_n) = \mathrm{tr} \left[ \mathcal{I}_{a_n|x_n} \circ \cdots \circ \mathcal{I}_{a_1|x_1} (\varrho_{\rm in}) \right].
\end{equation}

\subsection{Discrete-Time Counting Processes}
\label{sbsec:dt-count-process}
To model an autonomous discrete-time counting process, we consider the simplest counting scenario. 
To this end, we assume $\mathcal{X}$ is trivial (there is no external input) and limit the outputs to $\mathcal{A} = \{0, 1\}$, corresponding to the absence or presence of a tick, respectively. 
Moreover, the same two-outcome operation is applied at each time step. 
We denote the corresponding classical transition matrices by $\{T_0, T_1 \}$ and the quantum instrument by $\{\mathcal{I}_0, \mathcal{I}_1\}$. 

A complete clock model generally requires specifying both the no-tick ($0$) and tick ($1$) branches of the instrument, as well as the state of the system immediately after a tick event.
Throughout this work, we consider clocks that operate cyclically, meaning that each tick reinitializes the system to the same reference initial state $\varrho_\mathrm{in}$. 
With this reset mechanism, successive inter-tick intervals are governed by the same underlying dynamics, and the resulting counting process belongs to the class of renewal processes~\cite{cox1962renewal}. 
In this case, the long-time counting statistics are completely determined by the distribution of a single inter-tick interval.
For this reason, it is sufficient to focus on the first-tick statistics generated by the no-tick evolution $\mathcal{I}_0$, which will be the main object of study in what follows.
See \cref{fig:ConceptualPanel} for a conceptual overview.

In a discrete-time scenario, starting from an initial state $\varrho_\mathrm{in}$, the probability that the first tick occurs at time step $L \in \mathbb N$ is
\begin{equation}
    p_L := \tr \left[ \mathcal{I}_1^{ } \circ \mathcal{I}_0^{L-1}(\varrho_\mathrm{in}) \right] 
         = \tr \left[ (\id - M_0) \mathcal{I}_0^{L-1}(\varrho_\mathrm{in}) \right].
    \label{eq:p(L)_def}
\end{equation}
The sequence of such probabilities forms the \textit{first-tick-time} distribution. 
Here, $\id - M_0 = M_1 = \mathcal{I}_1^*(\id)$ is the positive operator-valued measure (POVM) element associated with the tick event, and $M_0$ is the one associated with no-tick.
Mean and variance of the tick time can be obtained as
\begin{align}
    \mudt &= \sum_{L\geq 1} L p_L, \\ 
    \sigmadt^2 &= \sum_{L\geq 1} (L - \mudt)^2 p_L.
\end{align}

A canonical realization of the reset mechanism discussed above is given by the measure-and-prepare map
\begin{equation}
    \mathcal{I}_1(\varrho) = \tr \left[ (\id - M_0) \varrho \right] \, \varrho_\mathrm{in},
    \label{eq:reset_map}
\end{equation}
which prepares the reference state $\varrho_\mathrm{in}$ after each tick event. 
The corresponding unconditional evolution reads
\begin{equation}
    \Phi(\varrho) = \mathcal{I}_0(\varrho) + \tr \left[ (\id - M_0) \varrho \right] \, \varrho_\mathrm{in}.
\end{equation}
For the reset map \cref{eq:reset_map}, the waiting times $\tau_1, \tau_2, \ldots$ between consecutive ticks are independent and identically distributed random variables with a common distribution
\begin{equation}
    \mathrm{Prob}(\tau_k = L) = p_L, 
    \qquad \forall\, k \geq 1.
\end{equation}
The time of the $n$th tick is thus given by the partial sum $T_n = \tau_1 + \cdots + \tau_n$, formally establishing the resulting sequence of events as a renewal process~\cite{cox1962renewal}.

Using $\Phi = \mathcal{I}_0 + \mathcal{I}_1$ and the trace-preserving property of $\Phi$, we can rewrite \cref{eq:p(L)_def} in the convenient form
\begin{equation}
    p_L = f_{L-1} - f_L,
    \label{eq:pL_fL}
\end{equation}
where 
\begin{equation}
    f_L := \tr \left[ \mathcal{I}_0^{L} (\varrho_\mathrm{in}) \right]
\end{equation}
is the probability that no tick occurs in the first $L$ steps. Note that it satisfies $f_0 = \tr(\varrho_\mathrm{in}) = 1$. 

The conditional probability that a tick occurs at the next step, given survival up to step $L-1$, is
\begin{equation}
    h_L := \frac{p_L}{f_{L-1}} = 1 - \frac{f_{L}}{f_{L-1}} = \tr \left[ (\id - M_0)\, \varrho_{L-1} \right],
    \label{eq:dt_hazard_def}
\end{equation}
where $\varrho_{L-1}:=\mathcal I_0^{L-1}(\varrho_\mathrm{in})/{f_{L-1}}$ is the normalized conditional state after $L-1$ no-tick time steps (defined whenever $f_{L-1} > 0$).
We refer to $h_L$ as the \textit{discrete-time hazard} sequence~\cite{Muthen2005}.

\subsection{Continuous-Time Limit} 
\label{subsec:ct-model} 
Physically, the elementary no-tick step $\mathcal{I}_0$ may arise as the evolution of a quantum system during a time interval of duration $\delta > 0$. We are particularly interested in the continuous monitoring limit, $\delta \rightarrow 0$. 
In this regime, we assume that the no-tick step approaches the identity, $\mathcal{I}_0^{(\delta)} \rightarrow \text{id}$. 
We will explore this case in detail later.

Under the following standard assumptions on the no-tick family $\{\widetilde{\mathcal{I}}_0^{(t)}\}_{t \ge 0}$~\cite{Woods2021}:
\begin{equation}
    \widetilde{\mathcal{I}}_0^{(0)} = \idmap, 
    \qquad \widetilde{\mathcal{I}}_0^{(t+s)} = \widetilde{\mathcal{I}}_0^{(t)} \circ \widetilde{\mathcal{I}}_0^{(s)}, 
    \qquad \lim_{t\to0^+} \bigl\| \widetilde{\mathcal{I}}_0^{(t)} - \idmap \bigr\| = 0,
    \label{eq:ct_semigroup_properties_rewritten}
\end{equation}
the dynamics forms a strongly continuous semigroup and hence admits a generator $\mathcal{L}_0$ such that~\cite{Pazy1983}
\begin{equation}
    \widetilde{\mathcal{I}}_0^{(t)}=e^{t\mathcal{L}_0}.
    \label{eq:ct_notick_semigroup_rewritten}
\end{equation}
At the infinitesimal level, the generator is obtained from the discrete no-tick map as
\begin{equation}
    \mathcal{L}_0 = \lim_{\delta\to0^+} \frac{1}{\delta}(\mathcal{I}_0^{(\delta)} - \idmap).
    \label{eq:ct_generator_limit_rewritten}
\end{equation}

Therefore, in continuous time, the quantum counting process may be directly modeled using the master equation in Lindblad form by defining the \textit{ticks} of the clock as the \textit{quantum jumps} of the master equation. 
At the differential level, the time evolution of a quantum state $\varrho$ under such a Lindblad equation is given by
\begin{equation}
    \dot\varrho = -i[H,\varrho] + \sum_\alpha \left( L_\alpha^{ } \varrho L_\alpha^\dagger - \frac{1}{2}\{L_\alpha^\dagger L_\alpha^{ } ,\varrho\}\right),
    \label{eq:dot_rho_lindblad}
\end{equation}
which is a typical formalism to describe the effective evolution of a quantum system weakly coupled to an environment~\cite{Breuer2007,Hofer2017}.
Under a stochastic unraveling of this master equation~\cite{Wiseman2009,Landi2024,Bettmann2024,Meier2026}, just as in the discrete-time scenario, we consider the time it takes for the first jump associated with some operator $L_\alpha$ to occur, which is also known as the \textit{waiting time} or the \textit{first passage time} in stochastic systems~\cite{Prech2025,Kewming2024}.

In the most general setting, let us consider a subset $\aleph = \{\alpha : \text{tick} \}$ of jumps that we count as ticks~\cite{Landi2024,Prech2025}.
To determine the time of the first jump associated with an operator $L_\alpha^{ }$ where $\alpha \in \aleph$, we can consider the conditional non-normalized state $\varrho^0(t)$ that follows the evolution equation $\dot \varrho^0 = \mathcal{L}_0 \, \varrho^0$, with~\cite{Dost2023,Meier2025a}
\begin{equation}
    \mathcal{L}_0(\,\cdot\,) = -i[H, \, \cdot \,] - \frac{1}{2} \sum_{\alpha} \left \{ L_\alpha^\dagger L_\alpha^{ }, \, \cdot \, \right \} + \sum_{\beta \notin \aleph} L_\beta^{ } \, \cdot\, L_\beta^\dagger.
    \label{eq:L_0_general}
\end{equation}
The non-trace-preserving operator $\mathcal{L}_0$ is the general form of the conditional no-tick evolution generator in continuous time, as obtained in the limit of \cref{eq:ct_generator_limit_rewritten}.

Taking the trace of $\varrho^0(t)$ results in the survival probability
\begin{equation}
    f(t) := \tr[\varrho^0(t)].
    \label{eq:f(t)_CT_def}
\end{equation}
This represents the probability that, at time $t$, a jump has not yet occurred. Since the operators $L_\alpha^\dagger L_\alpha$ are positive semi-definite matrices ($L_\alpha^\dagger L_\alpha\geq 0$), this probability is monotonically decreasing.
Given the initial condition $\tr[\varrho^0(0)]=1$, $f(t)$ describes a well-defined complementary cumulative probability. 
The time derivative then gives the corresponding probability density $p(t)=-\partial_t f(t)$, which characterizes the waiting-time distribution. 
Using $p(t)$, the mean and variance of the waiting time can be determined as:
\begin{align}
    \muct &= \int_0^\infty \dd t\, t \, p(t),\\ 
    \sigmact^2 &= \int_0^\infty \dd t\, (t-\muct)^2 p(t).
\end{align}

The continuous-time model can be related back to the discrete-time model on the level of the tick probability distribution.
Using the identity in \cref{eq:pL_fL} in the limit as $\delta \rightarrow 0^+$, the following holds up to corrections of order $\delta^2$:
\begin{equation}
    p_L = \delta\, p(L\delta), 
    \qquad (\text{as }\delta\rightarrow 0^+).
    \label{eq:disc_to_cont_pL}
\end{equation}
Moreover, in the limit as $\delta\rightarrow 0^+$, sums over the discrete-time steps converge to Riemann integrals, as $\delta \sum_{L\geq 1} \rightarrow \int_0^\infty \dd t$, yielding:
\begin{align}
    \delta \, \mudt &\xrightarrow{\delta \rightarrow 0^+}  \muct, \text{ and} 
    \label{eq:mudt_to_muct} \\
    \delta^2 \,\sigmadt^2  &\xrightarrow{\delta \rightarrow 0^+} \sigmact^2.
    \label{eq:sigmadt_to_sigmact}
\end{align}

As in the discrete-time case, a hazard rate can be defined to quantify the instantaneous tick probability rate~\cite{Allison1982,Kleinbaum2012}. 
Given $\varrho^\text{no-tick}(t) = \varrho^0(t)/f(t)$, the normalized conditional no-tick state, the hazard rate is defined analogously to \cref{eq:dt_hazard_def} as:
\begin{equation}
    h(t) := -\tr \left[ \mathcal{L}_0 \, \varrho^\text{no-tick}(t) \right] = \sum_{\alpha \in \aleph} \tr \left[ L_\alpha^\dagger L_\alpha^{ } \, \varrho^\text{no-tick}(t) \right].
    \label{eq:h(t)_def}
\end{equation}

\section{Figures of merit and classical finite-memory bound} 
\label{sec:figures_of_merit_classical_bound}
An accurate counting process should produce its first tick within a narrow temporal window around a fixed mean time. 
The corresponding optimization problem is therefore to minimize the variance for fixed mean, subject to a finite internal memory of dimension $d$. 
The constraint on the system's internal memory is essential for distinguishing classical from quantum counting processes \cite{hoffmann2018,budroni2019_memorycost,spee2020_simulating}. 
Without it, an arbitrarily large classical memory can encode the elapsed time and generate an arbitrarily sharp first-tick distribution. 
Thus, we compare quantum processes acting on a $d$-dimensional Hilbert space to classical automata with $d$ internal states.

\subsection{Classical Finite-Memory Variance Bound}
\label{subsec:classical_variance_benchmark}
In discrete time, for $\mudt \leq d$, a classical counter can produce a deterministic first tick and hence achieve $\sigmadt^2 = 0$.
Contrarily, for $\mudt > d$, a perfect deterministic ticking mechanism is not possible, and the tick variance has a nontrivial lower bound.
A lower bound on $\sigmadt^2$ was only proven for dimension $d=2$~~\cite{Budroni2021Ticking-clock}, but for higher $d$, the lower bound remained a conjecture.
Here, we rigorously prove that for any classical $d$-state discrete-time automaton, the first-tick mean $\mudt$ and variance $\sigmadt^2$ obey the following theorem:
\begin{theorem}[Classical finite-memory variance bound]
\label{thm:classical_variance_bound}
    The variance of any $d$-dimensional classical discrete-time counting process, with arbitrary initial distribution and tick transition that eventually ticks, is bounded by,
    \begin{equation}
        d\,\sigmadt^2 \geq \mudt(\mudt-d).
        \label{eq:classicalbench}
    \end{equation}
\end{theorem}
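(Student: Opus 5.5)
The plan is to view the first-tick time $\tau$ as the absorption time of a Markov chain with $d$ transient states and use the fact that the claimed minimizer is the discrete Erlang ladder. In that ladder each of $d$ stages is left with probability $p=d/\mudt$, giving variance $d(1-p)/p^2=\mudt(\mudt-d)/d$. So \cref{eq:classicalbench} is the discrete analogue of the Aldous--Shepp bound for phase-type distributions. With $T_0$ substochastic, starting from state $j$ write $m_j=\mathbb E_j[\tau]$, $V_j=\mathrm{Var}_j[\tau]$ and $v_j:=dV_j-m_j(m_j-d)$. The goal is to show $v_j\ge0$ for every $j$, then pass to an arbitrary $\pi$.

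\textbf{Step 1 (reduction to pure initial states).} For a mixture $\pi$, the law of total variance gives $\sigmadt^2=\sum_j\pi_jV_j+\mathrm{Var}_\pi(m)$. If $dV_j\ge m_j(m_j-d)$ for each $j$, then
\begin{equation}
d\sigmadt^2\ge\sum_j\pi_j m_j^2-d\,\mudt+d\,\mathrm{Var}_\pi(m)=\mudt^2-d\,\mudt+(d+1)\mathrm{Var}_\pi(m)\ge\mudt(\mudt-d).
\end{equation}
So it suffices to treat deterministic initial states, and the initial-distribution freedom is harmless. When $\mudt\le d$ the bound is trivial, but the argument should not need a case split.

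\textbf{Step 2 (first-step recursion).} Conditioning on the first transition, $\tau=1+\tau'$ with $\tau'=0$ on a tick. This yields $m_j=1+\sum_kT_{jk}m_k$ and $V_j=\sum_kT_{jk}V_k+\mathrm{Var}(m')$, where $m'$ is the random next-state mean, set to $0$ on a tick. Substituting gives
\begin{equation}
v_j=\sum_kT_{jk}v_k+R_j
\end{equation}
with an explicit remainder $R_j$. One checks that $R_j$ is \emph{not} nonnegative in general, because the one-step variance of $m'$ can be tiny while $m_j$ is large. Hence a naive maximum-principle argument ($v=T_0v+R$ with $R\ge0$ forcing $v\ge0$) fails, and the proof must be genuinely global in the $d$ states.

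\textbf{Step 3 (the global argument, main obstacle).} I would first try to decompose $\tau=\sum_{i=1}^dN_i$ by occupation times $N_i$. Here I would use the discrete Aldous--Shepp idea of relabelling states by the order in which they are first entered. This reduces the problem to a ``progressive'' chain: a ladder in which from the $i$-th newly visited state one either stays, moves forward, or returns. The intended comparison is that returning to an already visited state only adds variance relative to the mean. A convexity argument then shows that, for fixed total mean, an acyclic ladder with stage means $a_i$ has variance $\sum_i a_i(a_i-1)$. Jensen/Cauchy--Schwarz bounds this below by $\mudt(\mudt/d-1)$, with equality iff all $a_i=\mudt/d$; this is exactly the Erlang ladder.

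The hard part is justifying the comparison for chains with backward jumps and branching. Concretely, one must show that replacing such a chain by the acyclic ladder with the same occupation means does not increase $d\,\sigmadt^2-\mudt^2$. If the direct coupling is awkward, my fallback is induction on $d$: remove the last-visited state, merge its excursions into a geometric holding time on its predecessor, and verify with the recursion of Step 2 that $v$ can only decrease under the merge.
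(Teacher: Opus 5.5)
Your Steps 1 and 2 are correct, and Step 1 is exactly the mixture argument the paper uses at the end. Your serial-ladder computation is also right. The gap is Step 3, which carries the entire content of the theorem and is only outlined, not proved.

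\textbf{What your outline does and does not settle.} The case you can actually complete is a chain without cycles (beyond self-loops). There, conditioning on the sequence of distinct states visited, $\tau$ is a sum of at most $d$ independent geometric holding times. Your Cauchy--Schwarz bound combined with the Step~1 mixture inequality then finishes that case.

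Everything else rests on the asserted comparison that ``returning to an already visited state only adds variance relative to the mean.'' This is the genuinely nontrivial part of the Aldous--Shepp theorem, and you give no justification for it.

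The reductions you sketch are also not well defined as stated. Relabelling states by order of first entry depends on the trajectory, so it does not produce a homogeneous chain with a fixed $d\times d$ transition matrix; conditioning on the entry order in general destroys the Markov structure. Your fallback merges the excursions of a ``last-visited'' state into a geometric holding time on one predecessor. That preserves the law of $\tau$ only if the state always returns to that predecessor. Your own observation in Step~2, that the one-step remainder $R_j$ has no sign, shows that checking ``$v$ can only decrease under the merge'' is not routine.

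\textbf{The missing idea: transfer to continuous time instead of redoing Aldous--Shepp.} The paper moves the problem to continuous time exactly, via uniformization (Poissonization):
\begin{itemize}
\item Insert i.i.d.\ $\operatorname{Exp}(\lambda)$ waiting times $E_k$ between the discrete updates and set $\widetilde\tau=\sum_{k=1}^{\tau}E_k$.
\item $\widetilde\tau$ is the absorption time of a continuous-time chain with sub-generator $\lambda(T_0-\id)$. This chain lives on the \emph{same} $d$ transient states: off-diagonal rates are $\lambda(T_0)_{ij}\ge 0$, and self-loops become virtual jumps, so no phases are added.
\item Conditionally on $\tau=L$, $\widetilde\tau\sim\operatorname{Erlang}(L,\lambda)$, with mean $L/\lambda$ and variance $L/\lambda^2$.
\item The laws of total expectation and total variance then give $\mathbb E[\widetilde\tau]=\mudt/\lambda$ and $\mathrm{Var}(\widetilde\tau)=(\sigmadt^2+\mudt)/\lambda^2$.
\item The continuous-time Aldous--Shepp bound $\mathrm{Var}(\widetilde\tau)\ge\mathbb E[\widetilde\tau]^2/d$ yields $\sigmadt^2+\mudt\ge\mudt^2/d$. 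This is the claim, valid for arbitrary cyclic or branching $T_0$.
\end{itemize}
Your ladder formula already points to this. Since $\sum_i a_i(a_i-1)=\sum_i a_i^2-\mudt$, replacing geometric holding times by exponential ones with the same means adds exactly $\mudt$ to the variance.
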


The lower bound was known to be saturated by the Erlang-type ladder process described in \cref{subsec:erlang_ladder_clock}, but the optimality of this process in discrete time remained unknown.
This work provides the first rigorous proof of the validity of \cref{eq:classicalbench} for arbitrary dimension $d$.
\begin{proof}[Proof idea]
    The central idea of the proof is to map the discrete-time process to a continuous-time one via Poissonization~\cite{Serfozo2009}.
    In the continuous-time representation, a well-known results by Aldous and Shepp proves optimality of the (continuous-time) Erlang process~\cite{AldousShepp1987}.
    We map this optimality to our setup and thereby establish \cref{eq:classicalbench}, our first main result.
    The full statement of the theorem, including detailed proof, is presented in \cref{app:classical_bound}.
\end{proof}

Motivated by \cref{eq:classicalbench}, we define the discrete-time cost function as follows:
\begin{equation}
    \mathrm{CF}_d := d \, \sigmadt^2 - \mudt(\mudt-d).
    \label{eq:cost_function}
\end{equation}
Since all classical $d$-state processes are bound by \cref{eq:classicalbench}, they inherently satisfy $\mathrm{CF}_d \geq 0$. 
Consequently, for a fixed dimension and mean waiting time
\begin{equation}
    \mathrm{CF}_d < 0
\end{equation}
rules out every classical $d$-state realization of the same first-tick mean and variance.

\subsection{Continuous-Time Limit and Relative Precision}
\label{subsec:discrete_FOM_hazard_bounds}
Let \(\delta\) denote the duration of one discrete time step. 
In the continuous-time limit, multiplying the discrete-time cost function by \(\delta^2\) gives [cf.~Eqs.~\eqref{eq:mudt_to_muct} and~\eqref{eq:sigmadt_to_sigmact}],
\begin{align}
    \delta^2\mathrm{CF}_d \xrightarrow{\delta\rightarrow0^+} d\,\sigmact^2-\muct^2.
    \label{eq:CF_ct_limit}
\end{align}
Thus, the continuous-time counterpart of \(\mathrm{CF}_d<0\) is the condition
\begin{equation}
    \muct^2>d\,\sigmact^2.
\end{equation}

The overall time scale of a continuous-time process is arbitrary. 
A uniform rescaling of the no-tick generator,
\begin{equation}
    \mathcal L_0 \longmapsto \alpha\mathcal L_0, 
    \qquad \alpha > 0,
\end{equation}
induces $\muct\to{\muct}/{\alpha}$, and $\sigmact^2\to{\sigmact^2}/{\alpha^2}$.
The limiting expression in \cref{eq:CF_ct_limit} therefore depends on the arbitrary choice of time unit. 
The natural scale-independent figure of merit is instead the relative precision~\cite{Stupar2018,Erker2017,Woods2021,Wadhia2025}
\begin{equation}
    \mathcal N := \frac{\muct^2}{\sigmact^2}.
    \label{eq:N_precision_Def}
\end{equation}
Maximizing \(\mathcal N\) is equivalent to minimizing \(\sigmact^2\) at any fixed value of \(\muct\)~\cite{Meier2023}.

\begin{figure*}[ht]
    \centering
    \includegraphics[width=\linewidth]{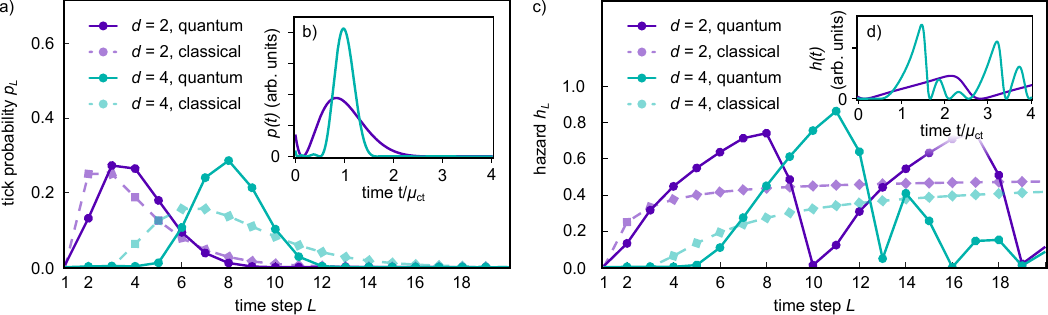}
    \caption{First-tick statistics of an optimized discrete-time quantum counting process. 
    (a) First-tick distribution \(p_L=f_{L-1}-f_L\), shown as a function of the discrete waiting time \(L\). 
    The coherent dynamics concentrates most of the probability in a narrow interval around the prescribed mean \(\mudt\), here chosen to be $\mudt=2d$. 
    (b) The inset shows the continuous-time waiting time distribution $p(t)$. 
    (c) Corresponding hazard \(h_L=1-f_L/f_{L-1}\). 
    The hazard remains suppressed during the initial no-tick evolution and rises sharply when the conditioned state reaches the detector direction. 
    (d) The inset shows the hazard $h(t)$ for continuous-time.}
    \label{fig:optimized_distribution_hazard}
\end{figure*}

At fixed physical mean \(\delta\mudt=\mu_*\), the discrete cost tends to
\begin{equation}
    \delta^2\mathrm{CF}_d \longrightarrow \mu_*^2
    \left(
        \frac{d}{\mathcal N}-1
    \right).
    \label{eq:min_CF_ct}
\end{equation}
The classical continuous-time bound
\begin{equation}
    \mathcal N \leq d
    \label{eq:classical_precision_bound}
\end{equation}
is the continuous-time counterpart of the bound in Eq.~\eqref{eq:classicalbench}.
In continuous time, this inequality is a direct consequence of the Aldous--Shepp bound and the optimality of the Erlang process~\cite{Erlang1917,AldousShepp1987}.
Hence, a process with $\mathcal N>d$ constitutes a violation of the classical continuous-time finite-memory bound. 
Quantum counting processes generically displaying such an advantage have been constructed in Refs.~\cite{Woods2022,Dost2023,Meier2025a}.
However the quantum process that gives the maximal possible value of $\mathcal{N}$ is not known.

\subsection{Hazard Representation of the Cost}
\label{subsec:hazard_bounds}
The cost function also admits a useful representation in terms of the discrete hazard.
We can define the largest hazard attained along the conditioned trajectory as
\begin{equation}
    h_{\max} := \sup_{L\geq1}h_L,
    \label{eq:hmax_def_rewritten}
\end{equation}
which is the largest instantaneous tick probability.
It allows to formulate a general lower bound on the discrete-time cost function as (derivation in \cref{app:derivation_hazard_1})
\begin{equation}
    \mathrm{CF}_d \geq \frac{2d\,\mudt}{h_{\max}} - (d+1)\mudt^2.
    \label{eq:CF_hazard_bound_discrete_rewritten}
\end{equation}

For a quantum counting process with no-tick effect \(M_0\), one also has
\begin{equation}
    h_L = \tr[ (\id-M_0)\varrho_{L-1}] \leq \|\id-M_0\|_\infty.
\end{equation}
The trajectory-dependent quantity \(h_{\max}\) can nevertheless be strictly smaller than the operator-norm bound, since the conditioned orbit need not visit the most detectable state.

Equation~\eqref{eq:CF_hazard_bound_discrete_rewritten} captures a simple tradeoff. 
A process whose hazard remains small throughout the conditioned evolution can achieve a large prescribed mean only by developing a broad survival tail. 
Conversely, a sharply concentrated first-tick distribution requires the hazard to remain suppressed initially and then increase rapidly near the typical ticking time. 
The temporal structure of this rise will play a central role in the optimized quantum processes studied below (see Fig~\ref{fig:optimized_distribution_hazard}). 
It also extends the continuous-time precision-resolution bound from Ref.~\cite{Meier2023} to discrete time.

Consequently, a process whose conditioned trajectory remains only weakly detectable cannot simultaneously produce a sharply concentrated first-tick distribution. 
The bound is generally not expected to be tight, since it retains only the largest value of the hazard and discards its detailed temporal profile, but it provides a simple state-dependent restriction on the achievable cost.

\section{Optimized counting processes}
\label{sec:optimal_quantum_counting}
In this section, counting processes that minimize the cost function are considered, starting with the optimal classical counting processes in \cref{subsec:erlang_ladder_clock}, the well-known Erlang ladder.
Then, in \cref{subsec:rank_one_detection_ansatz,subsec:numerical_optimization_discrete_time,subsec:numerical_optimization_continuous_time} we extend the optimization to quantum counting processes.

\subsection{Optimal Classical Process: Erlang Ladder}
\label{subsec:erlang_ladder_clock}
The finite-memory variance bound is tight. 
For every fixed mean $\mudt \geq d$, equality is attained by a discrete-time Erlang-type ladder process. 
To construct it as a quantum model, consider a clock system of $d$ internal states, $\mathcal{H}_d = \operatorname{span} \{ \ket{0}, \ldots, \ket{d-1} \}$, which is initially in the vacuum state $\varrho = \ketbra{0}$. 
At every step, the system advances by one rung with probability $q$ and remains on the same rung with probability $1-q$. An advance from the final state $\ket{d-1}$ produces a tick.

In quantum-instrument notation, the no-tick branch can be represented by
\begin{equation}
    \mathcal{I}_0(\varrho) = A_\mathrm{stay} \varrho A_\mathrm{stay}^{\dagger} + A_\mathrm{adv} \varrho A_{\mathrm{adv}}^{\dagger},
    \label{eq:erlang_notick_map}
\end{equation}
where $A_{\rm stay}=\sqrt{1-q}\,\id$, and
\begin{equation}
    A_\mathrm{adv} = \sqrt{q} \sum_{j=0}^{d-2} \ketbra{j+1}{j}.
    \label{eq:erlang_notick_kraus}
\end{equation}
The tick transition may be written as $A_\mathrm{tick} = \sqrt{q} \, \ketbra{0}{d-1}$, yielding $M_1 = A_\mathrm{tick}^{\dagger} A_\mathrm{tick} = q \ketbra{d-1}$.

Although the tick effect is rank one, the no-tick operation contains two Kraus operators and is therefore intrinsically stochastic as long as $q<1$. 
To attain a mean tick time strictly exceeding $d$, it is necessary for $q$ to be smaller than unity. 
In contrast, for the optimal quantum process, we later consider a coherent, single-Kraus quantum process that surpasses the classical limit. 

The clock must realize $d$ successful advances from the total number of $L$ before ticking. 
The Erlang first-tick distribution is consequently a negative binomial,
\begin{equation}
    p_L = \binom{L-1}{d-1} q^d (1-q)^{L-d}, 
    \qquad L = d, d + 1, \ldots.
    \label{eq:erlang_waiting_distribution}
\end{equation}
with the corresponding mean and variance being
\begin{equation}
    \mudt = \frac{d}{q}, 
    \qquad \sigmadt^2 = \frac{d(1-q)}{q^2}.
    \label{eq:erlang_mean_variance}
\end{equation}
For a prescribed mean $\mudt \geq d$, the transition probability is therefore fixed to $q_\mathrm{Erlang} = {d}/{\mudt}$.
Substitution into \cref{eq:erlang_mean_variance} yields  $\sigmadt^2 = {\mudt(\mudt-d)}/{d}$, and hence
\begin{equation}
    \mathrm{CF}_d = 0.
\end{equation}
The limiting case $q = 1$, or equivalently $\mudt = d$, is the deterministic ladder that advances at every step and ticks exactly at $L = d$.

\subsection{Optimal Quantum Process: Coherent Rank-One Ansatz}
\label{subsec:rank_one_detection_ansatz}
To find the optimal quantum counting process, we consider only pure initial states of the form $\varrho=\ketbra{\psi}$.
The restriction to a pure initial state is without loss for the variance minimization: a mixed initial state produces a convex mixture of waiting-time distributions, and the additional classical mixing cannot reduce their total variance.

Furthermore, we consider a no-tick operation containing only a single Kraus operator,
\begin{equation}
    \mathcal I_0(\varrho)=K_0\varrho K_0^\dagger ,
    \label{eq:single_kraus_notick}
\end{equation}
which in polar form reads
\begin{equation}
    K_0 = U_0 \sqrt{M_0}, 
    \qquad M_0 = K_0^\dagger K_0 \leq \id ,
\end{equation}
with $U_0$ unitary.
We take the complementary tick effect to be rank one,
\begin{equation}
    M_1 = \id-M_0 = (1-q) \ketbra{\Psi}, 
    \qquad 0 \leq q \leq 1.
    \label{eq:rank_one_tick_effect}
\end{equation}
Here \(\ket{\Psi}\) specifies the detectable direction, while \(1-q\) sets the strength of each detection attempt.

The ansatz thus has a simple interpretation. The unitary \(U_0\) coherently moves the conditioned state through Hilbert space, whereas the rank-one detector probes its overlap with a single direction. 
The optimization searches for a trajectory that remains nearly orthogonal to \(\ket{\Psi}\) at early times and approaches it only within a narrow interval around the typical ticking time. 
In terms of the hazard, the desired behavior is an initially suppressed \(h_L\) followed by a rapid increase. 

While choosing an initially pure state does not restrict the minimal achievable tick variance, the rank-one, single-Kraus ansatz is not proven to be the analytically optimal one among all quantum instruments. 
However, optimization over general effects in the small-dimensional cases consistently converges to this structure, and it provides the best solutions found in all dimensions investigated here (\cref{app:discrete_numerical_ansatz}).

\subsection{Discrete-Time Optimization}
\label{subsec:numerical_optimization_discrete_time}
For fixed Hilbert-space dimension \(d\) and prescribed mean first-tick time \(\bar\mu\), we minimize the cost function $\mathrm{CF}_d$ over the initial state and the no-tick dynamics.
Explicitly, the reduced optimization problem is
\begin{equation}
    \begin{aligned}
        \underset{\ket{\psi},\,U_0,\,q,\,\ket{\Psi}}{\operatorname{minimize}} \quad & \mathrm{CF}_d \\
        \text{subject to} \quad & \mudt=\bar\mu, \\
        & 0 \leq q \leq1.
    \end{aligned}
    \label{eq:reduced_fixed_mu_problem_main}
\end{equation}
The moments are evaluated through the exact resolvent formulas derived in \cref{app:computing_mu_sigma_dt}.
The model, defined by \(\ket{\psi}\), \(\ket{\Psi}\), \(U_0\) and $q$ can be fully characterized using $4(d-1)+1$ real parameters, before making use of any additional symmetries of a particular optimum (\cref{app:discrete_numerical_ansatz}).

For each pair \((d,\bar\mu)\), we numerically minimize \cref{eq:reduced_fixed_mu_problem_main} from multiple initial configurations and retain the best converged solution. Repeating the procedure over \(\bar\mu\) gives the optimized curve
\begin{equation}
    \mathrm{CF}^{\mathrm{opt}}_d(\bar\mu) := \min_{\mudt=\bar\mu}\mathrm{CF}_d .
    \label{eq:optimized_CF_curve}
\end{equation}
The resulting parameters are subsequently inserted back into the explicit no-tick dynamics to reconstruct \(p_L\), \(f_L\), and \(h_L\).

Figure~\ref{fig:optimized_distribution_hazard} displays the temporal structure of a representative optimum. 
The first-tick distribution in panel (a) is sharply localized compared with a memoryless geometric process having the same mean. 
This localization can be understood directly from the hazard in panel (c). 
During the initial part of the trajectory, the state remains in the approximately dark subspace orthogonal to \(\ket{\Psi}\), and the probability of ticking at each step is correspondingly small. 
Coherent evolution subsequently drives the conditioned state towards the detector direction, producing a rapid increase of \(h_L\) and concentrating the first-tick probability within a narrow temporal window.

The optimized detector becomes progressively weaker as the prescribed mean is increased. 
Numerically, its strength follows approximately
\begin{equation}
    1-q \propto \bar\mu^{-1}
    \label{eq:detector_strength_scaling}
\end{equation}
in the large-\(\bar\mu\) regime. 
This behavior is consistent with the emergence of a continuous-time limit: the probability of detection in an individual time step vanishes, while the accumulated detection probability over a finite physical time remains nonzero (\cref{app:disc_opt_large_mean}).

For dimension $d=2$, the cost function can be expressed in closed form and the optimal process can be characterized analytically. 
Both mean and variance of the tick time can be expressed as the solution of (nested) linear equations of $2\times 2$ matrices. 
In the limit of very large mean tick time $\mudt\to\infty$, we can express the cost using only rational and trigonometric functions, and we find that (cf.~\cref{subsec_analytical_d2})
\begin{align}
    \lim_{\mudt \to \infty} \left( \frac{\mathrm{CF}_2}{\mudt^2} \right) = -0.5982,
\end{align}
in agreement with the numerically optimized cost function (cf.~\cref{fig:optimized_cost_accuracy}).

\begin{figure}[t]
    \centering
    \includegraphics[width=\linewidth]{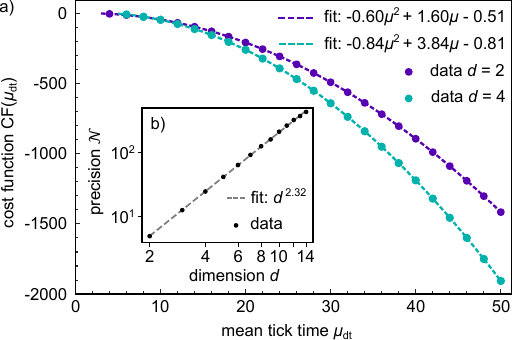}
    \caption{Optimized cost and continuous-time accuracy. 
    (a) Minimum discrete-time cost \(\mathrm{CF}^{\mathrm{opt}}_d(\mudt)\) as a function of the prescribed mean first-tick time for different Hilbert-space dimensions \(d\). 
    Negative values violate the classical $d$-state variance bound. 
    At large \(\mudt\), the optimized curves approach the continuous-time regime. 
    (b) Relative precision \(\mathcal N=\muct^2/\sigmact^2\) of the optimized continuous-time processes as a function of \(d\), on logarithmic axes. 
    The dashed line is a finite-size power-law fit \(\mathcal N_{\mathrm{opt}}\propto d^\alpha\), with \(\alpha\simeq 7/3\). 
    Because the general asymptotic bound of Ref.~\cite{Yang2020} is quadratic in \(d\), this exponent should be understood as an effective scaling over the numerically accessible range and cannot persist at asymptotically large dimension.}
    \label{fig:optimized_cost_accuracy}
\end{figure}

The numerically optimized cost functions are shown in Fig.~\ref{fig:optimized_cost_accuracy}(a). 
For every dimension considered, the optimization produces a range in which \(\mathrm{CF}^{\mathrm{opt}}_d<0\), and hence a waiting-time variance smaller than that of any classical \(d\)-state process with the same mean. 
The quantum improvement becomes more pronounced as the dimension increases. 
Moreover, the large-\(\mudt\) part of each curve approaches a well-defined scaling regime. 
In this regime the per-step detection becomes weak, following the scaling as in \cref{eq:detector_strength_scaling}, and the optimized discrete dynamics converges to a continuous-time no-tick evolution.

\subsection{Continuous-Time Optimization}
\label{subsec:numerical_optimization_continuous_time}
As we explained earlier the direct continuous-time limit analogous of \(\mathrm{CF}_d\) is given by the relative precision $\mathcal N$ (cf.~\cref{eq:N_precision_Def}) and is invariant under a uniform rescaling of all rates. 
It is therefore the natural objective function for the continuous-time optimization. 
As shown in \cref{subsec:discrete_FOM_hazard_bounds}, fixing an arbitrary value of \(\muct\) and minimizing the variance is equivalent to maximizing \(\mathcal N\).

The continuous-time counterpart of the rank-one ansatz is described by the conditional no-tick evolution
\begin{equation}
    \dot\varrho^0 = -i[H,\varrho^0] - \frac{1}{2}
    \left\{
        L^\dagger L,\varrho^0
    \right\},
    \label{eq:dot_rho_simplified}
\end{equation}
with a single rank-one jump effect $L^\dagger = \Gamma\ketbra{\Phi}$. 
For a pure initial state, the unnormalized conditioned wave function obeys
\begin{equation}
    \partial_t |\widetilde \psi_t \rangle = -iH_{\mathrm{eff}} |\widetilde \psi_t \rangle,
    \qquad H_{\mathrm{eff}} = H - \frac{i\Gamma}{2 }\ketbra{\Phi},
    \label{eq:partial_t_tilde_psi}
\end{equation}
with survival probability $f(t) = |\langle \widetilde \psi_t |\widetilde \psi_t \rangle|^2$.

The parameter \(\Gamma\) fixes only the overall time scale. 
We therefore set \(\Gamma = 1\) and maximize \(\mathcal N\) over the dimensionless Hamiltonian \(H/\Gamma\), the initial state \(\ket{\psi}\), and the jump direction \(\ket{\Phi}\). 
The continuous-time search contains \(4(d-1)\) real parameters, one less than the discrete-time problem due to time-scale invariance (\cref{appendix:parametrization_ct_model}). 
The moments \(\muct\) and \(\sigmact^2\) are evaluated directly from the continuous-time Lyapunov equations described in \cref{app:ct_lyapunov}.

The connection with the discrete optimization follows by setting
\begin{equation}
    K_0^{(\delta)} = e^{-iH_{\mathrm{eff}}\delta}.
    \label{eq:ct_discretized_K0}
\end{equation}
For small \(\delta\), we have
\begin{equation}
    K_0^{(\delta)} = e^{-iH\delta} \sqrt{\id - \Gamma \delta \ketbra{\Phi}} + O(\delta^2).
    \label{eq:ct_discretized_polar}
\end{equation}
This precisely matches the expression from \cref{subsec:rank_one_detection_ansatz} with a weak rank-one detector, $(1-q)\ketbra{\Psi}{\Psi}=\Gamma\delta\ketbra{\Phi}{\Phi}/2$, followed by coherent evolution with $U_0=e^{-iH\delta}$. 
The moments then satisfy $\mudt = \muct/\delta + O(1)$ and $\sigmadt^2 = \sigmact^2/\delta^2 + O(\delta^{-1})$ as in \cref{eq:mudt_to_muct,eq:sigmadt_to_sigmact}. 
Thus, the large-\(\mudt\) part of the discrete-time optimization and the direct continuous-time maximization of \(\mathcal N\) probe the same limiting family of processes.

Figure~\ref{fig:optimized_cost_accuracy}(b) shows the resulting dimension dependence. 
Over the dimensions accessible to our optimization, a log--log fit gives
\begin{equation}
    \mathcal N_{\mathrm{opt}}(d) \propto d^\alpha, 
    \qquad \alpha \simeq 2 + \frac{1}{3}.
    \label{eq:finite_size_accuracy_scaling}
\end{equation}
This effective exponent is larger than the approximately quadratic behavior suggested by previously constructed families of quantum clocks~\cite{Woods2022}. 
It suggests a transient, finite-dimensional correction to the asymptotic quadratic bound by Yang and Renner~\cite{Yang2020}.

Taken together, the two optimizations reveal the same physical mechanism. 
The optimal process combines a long approximately dark evolution with a coherently generated, sharply localized detection window. 
In discrete time, this structure minimizes \(\mathrm{CF}_d\) at fixed \(\mudt\); in the weak-detection limit it becomes a continuous-time rank-one jump process maximizing the scale-independent accuracy \(\mathcal N\).

\section{Discussion and outlook}
\label{sec:discussion_outlook}
We investigated the role of internal memory as a resource for classical and quantum counting processes. 
We first established a tight finite-memory variance bound for arbitrary classical $d$-state counting processes, thereby resolving the conjecture of Ref.~\cite{Budroni2021Ticking-clock}. 
For mean tick times larger than the memory dimension, the bound is saturated by the discrete Erlang ladder, which therefore provides a sharp classical benchmark for temporal precision. 
The classical bound can equivalently be viewed as a discrete-time trade-off between memory (dimension), resolution (inverse mean tick time), and precision (variance), complementing the precision--resolution~\cite{Meier2023} and precision--memory constraints~\cite{Woods2019,Woods2022,Yang2020} established for continuous-time clocks. 
Violations of the corresponding cost-function bound consequently certify first-tick statistics that cannot be reproduced by any classical process with the same memory. 

Our subsequent numerical optimization then addresses the problem of identifying finite-dimensional quantum counting processes that maximally violate this classical benchmark, demonstrating, at fixed memory dimension, a more peaked first-tick distribution as compared to a classical Erlang process. 
The best processes found consistently exhibit a simple structure, with a pure initial state, single-Kraus no-tick evolution, and a rank-one tick effect. 
For $d=2$, this numerical picture is complemented by an analytical treatment of the large-mean regime within this single-Kraus family, showing that the quantum advantage persists asymptotically in this class.

Intuitively, from the point of view of the hazard sequence, an extremal conditioned dynamics keeps the system weakly detectable at early times and drives it toward the detectable direction only around the typical ticking time, thereby concentrating the first-tick distribution. 
In the classical Erlang mechanism, this is achieved through stochastic progression across distinguishable states rather than coherent conditioned evolution. 
In contrast, in the quantum case, the numerical results point to coherent single-Kraus dynamics with rank-one detection as the relevant ingredients for optimality.
A rigorous proof of the optimality of this ansatz remains an open question.

In the large-mean regime, the optimized discrete-time processes enter a weak-detection limit and approach the same rank-one quantum-jump family obtained from the direct continuous-time optimization. 
This connects the discrete-time variance problem with the scale-independent continuous-time precision problem. 
Over the dimensions accessible numerically, the best continuous-time processes found exhibit an effective scaling close to $d^{7/3}$, which is faster than the quadratic scaling from~\cite{Woods2022,Yang2020,Dost2023}. 
An important open problem is to determine the crossover to the asymptotic regime and to establish whether the optimized family eventually approaches the quadratic upper bound, and with which prefactor.

Several questions concerning quantum optimality remain open. 
In particular, it would be interesting to determine whether the single-Kraus, rank-one family identified numerically is globally optimal, and to obtain analytical bounds beyond the qubit case. 
More generally, a characterization of the optimal quantum precision at arbitrary memory dimension would provide a natural quantum counterpart to the tight classical benchmark established here, and characterize the largest departure from classicality in this context, with applications
reaching also the realm of quantum foundations~\cite{Budroni2021Ticking-clock,Vitagliano2023}. 
Beyond quantum timekeeping, the continuous-time optimization is closely related to the problem of minimizing the coefficient of variation within classes of matrix-exponential distributions~\cite{Bladt2017}. 
Progress on the quantum optimization problem may therefore also provide insight into identifying optimal matrix-exponential distributions~\cite{Meszaros2022,Horvath2020,Battagliola2026}, which thus motivates an even broader scope investigation of the questions addressed in this work.

\acknowledgments
The authors thank Ali Asadian, Mohammad Mehboudi,  Nuriya Nurgalieva, Ralph Silva, Lucas Vieira, and Yuxiang Yang for discussions. 
BO acknowledges the Ernst Mach Grant - worldwide, administered by OeAD and financed by the Federal Ministry Women, Science and Research Republic of Austria (BMFWF). 
FM is co-funded by the European Union through the ERC Consolidator Grant Cocoquest (Grant Agreement No.~101043705), and through the ERC Synergy Grant SuperWave (Grant Agreement No.~101071882) and by the Austrian Science Fund FWF (Grant DOI:~\href{https://doi.org/10.55776/COE1}{10.55776/COE1}). 
This research was funded in whole or in part by the Austrian Science Fund FWF (Grant DOI:~\href{https://doi.org/10.55776/P35810}{10.55776/P35810}, and~\href{https://doi.org/10.55776/P36633}{10.55776/P36633}). 
BO, PB and GV also acknowledge support from the Grant No.~RYC2024-048278-I funded by MCIU/AEI (Grant DOI:~\href{https://doi.org/10.13039/501100011033}{10.13039/501100011033}) and FSE+.

\textit{AI Disclosure Statement.---}%
The authors disclose the use of LLM tools (ChatGPT, Gemini) for assistance in mathematical derivations, writing, and coding.
All the text is written by the authors, and they take full responsibility for its contents.

\textit{Author Contributions.---}%
BO analyzed the discrete-time model, and FM the continuous-time model.
CB proved the classical lower bound.
PB and GV supervised the project.
All authors contributed equally to verifying the correctness of the technical results and to writing and revising the manuscript.

Views and opinions expressed are, however, those of the authors only and do not necessarily reflect those of the European Union, and the European Union can not be held responsible for them.

\section*{Data Availability}
The data and numerical code used in this work are available upon reasonable request.

\section*{Appendices}
\appendix
\crefalias{section}{appendix}
\crefalias{subsection}{appendix}

\section{Classical Finite-Memory Bound}
\label{app:classical_bound}
In this appendix, we prove the classical finite-memory variance bound introduced in \cref{subsec:classical_variance_benchmark}. 
The proof is based on an exact correspondence between the discrete-time first-tick process and a continuous-time absorbing Markov process with the same number of transient internal states. 
This allows us to apply the continuous-time variance bound of Aldous and Shepp~\cite{AldousShepp1987} and transfer it back to the discrete-time setting.

\subsection{Classical First-Tick Process as an Absorption Problem}
\label{app:classical_moments}
As discussed in \cref{sbsec:dt-count-process}, throughout this work we assume that after each tick the clock is reset to the same reference initial state. 
The successive inter-tick waiting times $\tau_1, \tau_2, \ldots$ are therefore independent and identically distributed. 
Consequently, for the purpose of bounding their mean and variance it is sufficient to consider a single cycle, which we take to be the first waiting time $\tau_1$.

We recall the classical specialization of the counting process introduced in the main text. 
A classical clock with $d$ internal states is specified by an initial probability row vector
\begin{equation}
    \pi \in \mathbb R^d, 
    \qquad \pi_i \geq 0, 
    \qquad \pi \eta = 1,
\end{equation}
where
\begin{equation}
    \eta = (1, \ldots, 1)^{\mathsf T},
\end{equation}
and by two nonnegative transition matrices $T_0$ and $T_1$ associated, respectively, with the no-tick and tick outcomes. 
Their normalization condition reads
\begin{equation}
    T_0 + T_1 \geq 0, 
    \qquad (T_0 + T_1) \eta = \eta.
    \label{eq:classical_normalization_appendix}
\end{equation}
In particular, the no-tick matrix $T_0$ is substochastic,
\begin{equation}
    T_0 \eta \leq \eta.
    \label{eq:classical_substochastic_appendix}
\end{equation}
The row deficit of $T_0$ is the probability of producing a tick at the next step. 
We denote the corresponding column vector by
\begin{equation}
    r := T_1 \eta = (\id - T_0) \eta.
    \label{eq:classical_absorption_vector}
\end{equation}

The first-tick distribution introduced in \cref{eq:p(L)_def} therefore reduces, in the classical case, to
\begin{equation}
    \mathrm{Prob}(\tau_1 = L) = p_L 
    = \pi T_0^{L-1} T_1 \eta 
    = \pi T_0^{L-1} (\id - T_0) \eta, 
    \label{eq:classical_first_tick}
\end{equation}
where $L \geq 1$.
For first-tick statistics, the internal state reached after the tick is irrelevant. 
We may therefore replace all tick transitions by a transition to a single absorbing state $\partial$. 
Equivalently, one can consider the augmented stochastic transition matrix
\begin{equation}
    \widehat T =
    \begin{pmatrix}
        T_0 & r\\
        0   & 1
    \end{pmatrix}.
    \label{eq:classical_absorbing_transition_matrix}
\end{equation}
With this representation, $\tau_1$ is precisely the absorption time into $\partial$. 
Thus, the reset clock and the absorbing first-passage process are two equivalent descriptions of one inter-tick cycle. 
The reset mechanism is needed to generate a renewal process over many ticks, whereas the variance bound below depends only on this single-cycle absorption problem. 
It is clear from Eq.~\eqref{eq:classical_absorbing_transition_matrix} that there is a one-to-one mapping between absorption processes and counting processes where
the ticking and absorption probabilities are mapped onto one another.

We restrict the attention to the sector of internal states that is reachable from the support of $\pi$ before the first tick. 
States that are never reached during a cycle do not contribute to the first-tick distribution and may be discarded. 
Physically, we consider proper ticking clocks, for which the tick occurs with probability one from the relevant sector. 
We therefore impose the transience condition
\begin{equation}
    \varrho (T_0) < 1,
    \label{eq:classical_transience_appendix}
\end{equation}
where $\varrho(T_0)$ denotes the spectral radius of $T_0$, on this reachable sector, i.e., on the states reached by some transition. 
In finite dimension this condition guarantees that the probability of indefinite survival vanishes and that all moments of the first-tick time are finite. 
Indeed, if the reachable sector contains fewer than $d$ states, the argument below can be applied to that smaller sector and gives a bound at least as strong as the $d$-state bound.

Let us now recall the statement of \cref{thm:classical_variance_bound} from the main text, which is that any classical discrete-time counting process with at most $d$
reachable internal states, arbitrary initial distribution $\pi$, and transient no-tick matrix $T_0$ satisfies
\begin{equation}
    d\,\sigmadt^2 \geq \mudt (\mudt - d),
    \label{eq:classical_variance_bound_appendix}
\end{equation}
or equivalently,
\begin{equation}
    \sigmadt^2 + \mudt \geq \frac{\mudt^2}{d}.
    \label{eq:classical_variance_bound_shifted}
\end{equation}

In the following, we will proceed with a proof of this statement, considering the shifted form in \cref{eq:classical_variance_bound_shifted}. 
First, in the next subsection, we construct from $T_0$ an auxiliary continuous-time absorbing Markov process whose absorption time has variance proportional to $\sigmadt^2+\mudt$, while preserving the number of transient internal states.

\subsection{Exact Poissonization of the Discrete First-Tick Process}
\label{app:classical_poissonization}
We now associate with the discrete-time first-tick process of \cref{app:classical_moments} an auxiliary continuous-time absorbing Markov process. 
The construction is an exact Poissonization (or uniformization) of the discrete dynamics~\cite{Serfozo2009}. 
Crucially, it does not increase the number of transient internal states.

The Poissonization maps the discrete-time Markov process to a continuous-time one as follows:
An auxiliary Poisson clock of rate $\lambda > 0$ is considered, and we let $E_1, E_2, \ldots$ denote the successive waiting times of such a clock.
After each auxiliary jump, the state then changes according to the original discrete-time update.
Moreover, if the first tick of the discrete-time process occurs after $\tau_1$ updates (jumps), we define the corresponding auxiliary continuous absorption time as 
\begin{equation} 
    \widetilde{\tau} := \sum_{k=1}^{\tau_1} E_k. 
    \label{eq:poissonized_absorption_time} 
\end{equation}
Note that both $\tau_1$ (number of jumps) and the $E_k$'s (successive jump waiting times) are random variables.

We take the $E_k$'s to be independent and identically distributed (i.i.d.) according to an exponential distribution of rate $\lambda$, which we write as
\begin{equation}
    E_1, E_2, \ldots \overset{\mathrm{iid}}{\sim} \operatorname{Exp}(\lambda),
    \qquad \lambda > 0.
    \label{eq:poisson_waiting_times}
\end{equation}
The symbol $\sim$ means ``is distributed according to''.
Explicitly, each $E_k$ is a nonnegative random variable with survival probability
\begin{equation}
    \operatorname{Prob}(E_k > t) 
    = e^{-\lambda t}, 
    \qquad t \geq 0.
    \label{eq:exponential_survival_probability}
\end{equation}
In particular, mean and variance are given by
\begin{equation}
    \mathbb E[E_k] = \frac{1}{\lambda}, 
    \qquad \operatorname{Var}(E_k) = \frac{1}{\lambda^2}.
    \label{eq:exponential_waiting_time_moments}
\end{equation}

Conditioning on the event $\tau_1=L$ fixes the number of exponential inter-arrival times appearing in \cref{eq:poissonized_absorption_time}. 
Thus
\begin{equation}
    \widetilde{\tau} \mid(\tau_1 = L) = E_1 + \cdots + E_L.
    \label{eq:conditional_absorption_sum}
\end{equation}
The notation on the left denotes the random variable $\widetilde{\tau}$ conditioned on the event that the discrete first-tick time is equal to $L$. 
A sum of $L$ independent exponential random variables with common rate $\lambda$ is said to have an Erlang distribution with shape parameter $L$ and rate $\lambda$~\cite{Erlang1917,Serfozo2009}. 
We denote this by
\begin{equation}
    \widetilde{\tau} \mid(\tau_1 = L) \sim \operatorname{Erlang}(L,\lambda).
    \label{eq:conditional_erlang}
\end{equation}
Explicitly, the corresponding probability density is
\begin{equation}
    {\rm Prob} (\widetilde{\tau} = t | \tau_1 = L) 
    = \frac{\lambda^L t^{L-1} e^{-\lambda t}}{(L-1)!}, 
    \qquad t \geq 0.
    \label{eq:conditional_erlang_density}
\end{equation}
Its mean and variance are~\cite{Serfozo2009},
\begin{equation}
    \mathbb E[\widetilde{\tau} \mid\tau_1 = L] = \frac{L}{\lambda},
    \qquad \operatorname{Var}(\widetilde{\tau} \mid\tau_1 = L) = \frac{L}{\lambda^2}.
    \label{eq:conditional_erlang_moments}
\end{equation}

The unconditional distribution of $\widetilde{\tau}$ is a mixture over the possible values of $\tau_1$ and, in general, need not itself be an Erlang distribution.

We next identify the continuous-time generator associated with this construction. 
During an infinitesimal interval $\dd t$, the auxiliary Poisson clock jumps with probability
\begin{equation}
    \operatorname{Prob} (\text{one jump in }dt) = \lambda\,\dd t + o(\dd t),
    \label{eq:poisson_ring_short_time}
\end{equation}
while the probability of more than one jump is $o(\dd t)$. Conditional on a jump, the transient state is updated according to $T_0$. 
Therefore, the transient evolution over the interval $dt$ is
\begin{equation}
    (1 - \lambda \,dt) \id + \lambda\,dt\,T_0 + o(dt) = \id + \lambda(T_0-\id)\,dt + o(dt).
    \label{eq:poissonized_short_time_evolution}
\end{equation}
Comparing this expression with the standard short-time expansion
\begin{equation}
    P(dt) = \id + Q_\lambda \,dt + o(dt),
    \label{eq:generator_short_time_expansion}
\end{equation}
we obtain the transient generator
\begin{equation}
    Q_\lambda = \lambda(T_0-\id).
    \label{eq:poissonized_generator}
\end{equation}

The absorption-rate vector is determined by the row deficit of $Q_\lambda$. 
Using \cref{eq:classical_absorption_vector},
\begin{equation}
    - Q_\lambda \eta = \lambda (\id- T_0) \eta = \lambda r.
    \label{eq:poissonized_absorption_rates}
\end{equation}
Hence the full continuous-time generator, including the absorbing tick state $\partial$, is
\begin{equation}
    \widehat Q_\lambda =
    \begin{pmatrix}
        \lambda(T_0-\id) & \lambda r\\
        0                & 0
    \end{pmatrix}.
    \label{eq:poissonized_full_generator}
\end{equation}
For $i\neq j$,
\begin{equation}
    (Q_\lambda)_{ij} = \lambda(T_0)_{ij} \geq 0,
    \label{eq:poissonized_offdiagonal_rates}
\end{equation}
while
\begin{equation}
    (Q_\lambda)_{ii} = \lambda \bigl((T_0)_{ii} - 1\bigr) \leq 0.
    \label{eq:poissonized_diagonal_rates}
\end{equation}
The resulting process is therefore a valid continuous-time absorbing Markov process. 
Moreover, since $\varrho(T_0) < 1$, all eigenvalues of $Q_\lambda = \lambda(T_0 - \id)$ have strictly negative real part on the reachable sector, and the continuous-time process is transient.

It is useful to clarify the role of the diagonal entries of $T_0$ in this construction. 
A jump of the auxiliary Poisson clock may result in a transition $i\to i$. Such an event contributes one discrete update to $\tau_1$, even though the visible internal state does not change. 
In the continuous-time representation these events are usually referred to as virtual transitions. 
The rate of an actual departure from the transient state $i$ is therefore
\begin{equation}
    - (Q_\lambda)_{ii} = \lambda \bigl(1 - (T_0)_{ii}\bigr),
    \label{eq:poissonized_escape_rate}
\end{equation}
while the absorption component of this rate is $\lambda r_i$.

A central feature of the construction is that the auxiliary timing mechanism introduces no additional Markov memory, and that exponential waiting times are memoryless. 
The Poissonized process therefore has exactly the same transient internal states as the reachable sector of the original discrete process, which ensures the applicability of the Aldous--Shepp bound~\cite{AldousShepp1987} in the following steps.

In the terminology of phase-type distributions, these transient states are called \emph{phases}. 
Thus, a process with $d$ transient internal states is also referred to as a $d$-phase absorbing process; the absorbing state $\partial$ itself is not counted as a phase. 
We will use this terminology only when referring to the continuous-time absorption-time results below.

Finally, we emphasize that the Poissonization above is an exact stochastic embedding of the fixed discrete-time process, rather than a continuous-time approximation. 
Indeed, from \cref{eq:poissonized_generator},
\begin{equation}
    e^{Q_\lambda t} = e^{\lambda (T_0 - \id) t} = e^{-\lambda t} e^{\lambda t T_0} 
    = e^{-\lambda t} \sum_{n=0}^{\infty} \frac{(\lambda t)^n}{n!} T_0^n.
    \label{eq:poissonized_semigroup}
\end{equation}
If $N_t$ denotes the number of jumps of a rate-$\lambda$ Poisson process up to time $t$, then
\begin{equation}
    \operatorname{Prob}(N_t = n) = e^{-\lambda t} \frac{(\lambda t)^n}{n!},
    \label{eq:poisson_number_rings}
\end{equation}
and therefore
\begin{equation}
    e^{Q_\lambda t} = \sum_{n=0}^{\infty} \operatorname{Prob}(N_t = n)\, T_0^n.
    \label{eq:poissonized_semigroup_mixture}
\end{equation}
Thus, evolving the auxiliary continuous-time process for a duration $t$ is exactly equivalent to drawing the number $N_t$ of Poisson jumps and applying the original discrete no-tick transition matrix $T_0$ exactly $N_t$ times.
In the next subsection, we relate the first two moments of its absorption time $\widetilde{\tau}$ to the discrete first-tick moments $\mudt$ and $\sigmadt^2$.

\subsection{Relation Between Discrete and Continuous First-Tick Moments}
\label{app:classical_moment_bridge}
We now relate the first two moments of the discrete first-tick time $\tau_1$ to those of the auxiliary continuous absorption time $\widetilde{\tau}$.
We can rewrite Eq.~\eqref{eq:conditional_erlang_moments} directly in terms of the variable $\tau_1$ as 
\begin{equation}
    \mathbb E\!\left[ \widetilde{\tau} \mid \tau_1 \right] = \frac{\tau_1}{\lambda},
    \qquad \operatorname{Var}\! \left( \widetilde{\tau} \mid \tau_1 \right) = \frac{\tau_1}{\lambda^2}.
    \label{eq:conditional_poissonized_moments}
\end{equation}

The mean absorption time follows immediately from the law of total expectation (see e.g., \cite{BertsekasTsitsiklis2008}),
\begin{equation}
    \mathbb E[\widetilde{\tau}] = \mathbb E\!\left[ \mathbb E[ \widetilde{\tau}\mid\tau_1 ] \right].
    \label{eq:total_expectation_poissonized}
\end{equation}
Using \cref{eq:conditional_poissonized_moments} and $\mathbb E[\tau_1] = \mudt$, we obtain
\begin{equation}
    \mathbb E[\widetilde{\tau}] = \frac{\mathbb E[\tau_1]}{\lambda} = \frac{\mudt}{\lambda}.
    \label{eq:poissonized_mean_bridge}
\end{equation}
The law of total variance (see e.g., \cite{BertsekasTsitsiklis2008}) gives
\begin{equation}
    \operatorname{Var}(\widetilde{\tau}) 
    = \mathbb E\!\left[ \operatorname{Var}\!\left( \widetilde{\tau} \mid \tau_1 \right) \right]
    + \operatorname{Var}\!\left( \mathbb E[ \widetilde{\tau} \mid \tau_1 ] \right).
    \label{eq:total_variance_poissonized}
\end{equation}
The first term accounts for the fluctuations of the exponential waiting times at fixed discrete first-tick time. 
Using \cref{eq:conditional_poissonized_moments},
\begin{equation}
    \mathbb E\!\left[ \operatorname{Var}\!\left( \widetilde{\tau} \mid \tau_1 \right) \right] 
    = \frac{\mathbb E[\tau_1]}{\lambda^2} 
    = \frac{\mudt}{\lambda^2}.
    \label{eq:poissonized_conditional_variance_term}
\end{equation}
The second term accounts for the fluctuations of the discrete first-tick time itself,
\begin{equation}
    \operatorname{Var}\!\left( \mathbb E[ \widetilde{\tau} \mid \tau_1 ] \right)
    = \operatorname{Var}\!\left( \frac{\tau_1}{\lambda} \right)
    = \frac{\sigmadt^2}{\lambda^2}.
    \label{eq:poissonized_variance_of_conditional_mean}
\end{equation}
Combining the two contributions gives
\begin{equation}
    \operatorname{Var}(\widetilde{\tau}) = \frac{\sigmadt^2+\mudt}{\lambda^2}.
    \label{eq:poissonized_variance_bridge}
\end{equation}

We thus obtain the exact moment correspondence
\begin{equation}
    \mathbb E[\widetilde{\tau}] = \frac{\mudt}{\lambda}, 
    \qquad \operatorname{Var}(\widetilde{\tau}) = \frac{\sigmadt^2 + \mudt}{\lambda^2}.
    \label{eq:poissonized_moment_bridge}
\end{equation}
By \cref{app:classical_poissonization}, this auxiliary process has the same number of transient phases as the reachable sector of the original discrete-time clock. 
We can therefore combine \cref{eq:poissonized_moment_bridge} with the continuous-time variance bound for finite phase-type processes, which we do in the next subsection.

\subsection{Application of the Aldous--Shepp Variance Bound}
\label{app:aldous_shepp_application}
The final ingredient is the continuous-time variance bound of Aldous and Shepp~\cite{AldousShepp1987} for absorption times of finite-state continuous-time Markov processes. We recall the statement of their result.

\begin{theorem}[Aldous--Shepp variance bound]
\label{thm:aldous_shepp}
    Let $\tau$ be the absorption time of a finite-state continuous-time Markov process that starts from a specified transient state and evolves through at most $d$ transient states before absorption. 
    Assuming that absorption occurs with probability one and that the second moment of $\tau$ is finite,
    \begin{equation}
        \operatorname{Var}(\tau) \geq \frac{\mathbb E[\tau]^2}{d}.
        \label{eq:aldous_shepp_bound}
    \end{equation}
    Equality is attained by a serial chain of $d$ exponential stages with equal rates, whose absorption time follows an Erlang distribution with shape parameter $d$.
\end{theorem}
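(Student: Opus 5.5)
The approach is a martingale argument. Let $X_t$ be the Markov process with transient states $1,\dots,d$, generator $Q$ on the transient sector, absorbing state $\partial$, and a fixed starting state $x_0$. Define the mean-time-to-absorption function $h(i) := \mathbb E_i[\tau]$, with $h(\partial)=0$. Since absorption is certain and $\tau$ has a finite second moment, $h$ is finite and solves $Qh = -\eta$ on the transient states.

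\textbf{Step 1 (martingale identity).} By Dynkin's formula, $M_t := h(X_{t\wedge\tau}) + t\wedge\tau$ is a square-integrable martingale, with $M_0 = h(x_0) = \mathbb E[\tau] =: \mu$ and $M_\tau = \tau$. Hence $\operatorname{Var}(\tau) = \mathbb E[(M_\tau - M_0)^2]$. The drift part is continuous and of finite variation, so the quadratic variation of $M$ comes only from the jumps. This gives
\begin{equation*}
  \operatorname{Var}(\tau) = \mathbb E\Big[\sum_{s\le\tau} \big(h(X_s)-h(X_{s^-})\big)^2\Big]
  = \mathbb E\Big[\int_0^\tau \sum_{j} Q_{X_s j}\,\big(h(j)-h(X_s)\big)^2\,\dd s\Big],
\end{equation*}
where the sum over $j$ includes $\partial$. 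The problem is thereby reduced to a lower bound on the expected sum of squared jumps of $h$ along the path. Along every path, $h$ travels from the value $\mu$ down to $0$.

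\textbf{Step 2 (monotone case).} Suppose every jump of $h$ is downward, i.e.\ transitions only go to states with smaller $h$. Then each path visits at most $d$ distinct transient states, so it makes at most $d$ jumps (the last one into $\partial$), and the decrements sum to $\mu$. Cauchy--Schwarz then gives $\sum(\Delta h)^2 \ge \mu^2/d$ pathwise, which yields $\operatorname{Var}(\tau)\ge\mu^2/d$. Equality requires exactly $d$ equal decrements $\mu/d$ on almost every path. This forces the serial chain with equal rates, i.e.\ the Erlang distribution with shape $d$.

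\textbf{Step 3 (general case, main obstacle).} When upward jumps of $h$ occur, a path can make many small downward steps interleaved with returns, so the naive Cauchy--Schwarz count fails. I would first try to order the states by decreasing $h$ and attach to each state $i$ the level interval it is responsible for. I would then lower-bound the expected squared jumps by charging every downward crossing of a level $a\in(0,\mu)$ to the state from which the crossing occurs. The key estimate to prove is that the expected amount of time spent while crossing levels, weighted appropriately, dominates $\int_0^\mu$ of a density of at most $d$ atoms. Concretely, one shows that an upward excursion followed by its return contributes at least as much squared jump mass as the direct step it replaces: $(a+b)^2 \ge a^2 + b^2$, plus the extra return steps. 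This would give a loop-erasure reduction to the monotone case.

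If this pathwise reduction turns out to be too delicate, the fallback is variational. Fix the vector $h$ and minimize the right-hand side of Step 1 over generators $Q$ satisfying $Qh=-\eta$. The objective equals $\sum_i G(x_0,i)\sum_j Q_{ij}(h_j-h_i)^2$, where $G$ is the Green's function, i.e.\ expected occupation times. I would then argue via Lagrange multipliers or convexity that any mass placed on upward transitions can be redistributed to downward ones without increasing the objective. The extremal configurations would then be the monotone chains of Step 2. Handling this redistribution rigorously, and in particular controlling how $G$ changes when $Q$ is modified, is the step I expect to be hardest.
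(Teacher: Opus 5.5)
Your Step 1 is correct and is the right framework. After discarding states not reachable from $x_0$, $h$ is bounded and $\tau\in L^2$, so the stopped martingale is $L^2$-bounded and $\operatorname{Var}(\tau)=\mathbb E\bigl[\sum_{s\le\tau}\bigl(h(X_s)-h(X_{s^-})\bigr)^2\bigr]$. Step 2 is also fine. The gap is Step 3, which carries all the content of the theorem and which you leave as an unexecuted plan. ``Charging every downward crossing of a level'' is not yet a proof, because a level can be crossed downward many times on one path and an excursion can contain many up and down steps. What you need is a definite rule that selects, on every path, at most $d$ \emph{distinct} jumps of that path whose $h$-increments still account for the full drop $\mu$. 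The excursion-replacement sketch does not supply this: the ``direct step'' need not be a transition of the chain, and you never show that at most $d$ steps survive. The variational fallback does not supply it either, since it would require controlling the Green's function under changes of $Q$, which you leave open.

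The missing idea is chronological loop erasure, which deletes jumps rather than replacing them. Let $z_0=x_0,z_1,\dots,z_N=\partial$ be the jump chain of a path and set $y_0:=x_0$. Let $\sigma_0$ be the last index with $z_k=y_0$ and put $y_1=z_{\sigma_0+1}$. Let $\sigma_1$ be the last index with $z_k=y_1$ and put $y_2=z_{\sigma_1+1}$, and continue until $y_m=\partial$. The $y_j$ are distinct, because after $\sigma_i$ the path never returns to $y_i$; hence $m\le d$. Each step $y_j\to y_{j+1}$ is literally the jump $z_{\sigma_j}\to z_{\sigma_j+1}$ of the original path, and these occur at distinct indices. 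Hence, pathwise,
\[
  \sum_{k=0}^{N-1}\bigl(h(z_{k+1})-h(z_k)\bigr)^2
  \;\ge\;\sum_{j=0}^{m-1}\bigl(h(y_{j+1})-h(y_j)\bigr)^2
  \;\ge\;\frac{1}{m}\Bigl(\sum_{j=0}^{m-1}\bigl(h(y_{j+1})-h(y_j)\bigr)\Bigr)^2
  =\frac{\mu^2}{m}\;\ge\;\frac{\mu^2}{d},
\]
since the increments telescope to $h(\partial)-h(x_0)=-\mu$ whatever their signs. Taking expectations in Step 1 gives the bound with no monotonicity assumption, so Step 2 becomes a special case and Step 3 disappears.

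Your excursion idea can also be made rigorous: keep only the jumps at which $h$ reaches a new running minimum. There are at most $d$ of them, because the running minima are distinct values of $h$ on transient states. Each such jump is at least as large as the corresponding drop of the running minimum, so Cauchy--Schwarz again gives $\mu^2/d$.

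For the equality clause, a direct computation for the Erlang chain suffices: $\mathbb E\tau=d/\lambda$ and $\operatorname{Var}\tau=d/\lambda^2$. Your uniqueness claim is not needed, and as written it covers only the monotone case. The paper gives no proof of this statement and simply imports it from Aldous and Shepp. With this fix, your argument would be a self-contained proof of the ingredient the paper uses as a black box.
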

We note that by the one-to-one mapping between reset clocks and absorbing process, see the construction in \cref{eq:classical_absorbing_transition_matrix}, the dimensional bound on clock precision in Ref.~\cite{Woods2022} can be obtained directly from \cref{thm:aldous_shepp}.

We saw in \cref{app:classical_poissonization} that the auxiliary process generated by $\widehat Q_\lambda$ is a continuous-time absorbing Markov process with at most $d$ transient phases. 
The absorbing state $\partial$ is not counted among these phases. 
Moreover, the transience condition $\varrho(T_0)<1$ guarantees almost-sure absorption on the reachable sector, while \cref{eq:poissonized_moment_bridge} guarantees that the first two moments of the absorption time $\widetilde{\tau}$ are finite. 
Hence \cref{thm:aldous_shepp} applies.

To prove our bound, we first consider a definite initial internal state,
\begin{equation}
    \pi = e_i^{\mathsf T},
    \label{eq:pure_initial_state}
\end{equation}
for some reachable state $i$. 
Applying \cref{eq:aldous_shepp_bound} to the auxiliary absorption time $\widetilde{\tau}$ gives
\begin{equation}
    \operatorname{Var}(\widetilde{\tau}) \geq \frac{\mathbb E[\widetilde{\tau}]^2}{d}.
    \label{eq:aldous_shepp_poissonized}
\end{equation}
Substituting the exact moment relations
\begin{equation}
    \mathbb E[\widetilde{\tau}] = \frac{\mudt}{\lambda},
    \qquad \operatorname{Var}(\widetilde{\tau}) = \frac{\sigmadt^2 + \mudt}{\lambda^2},
\end{equation}
obtained in \cref{eq:poissonized_moment_bridge}, we find
\begin{equation}
    \frac{\sigmadt^2 + \mudt}{\lambda^2} \geq \frac{1}{d} \frac{\mudt^2}{\lambda^2}.
    \label{eq:aldous_shepp_substitution}
\end{equation}
The arbitrary Poisson rate $\lambda$ cancels, yielding
\begin{equation}
    \sigmadt^2 + \mudt \geq \frac{\mudt^2}{d}.
    \label{eq:classical_bound_pure_state_shifted}
\end{equation}
Equivalently,
\begin{equation}
    d\,\sigmadt^2 \geq \mudt(\mudt-d).
    \label{eq:classical_bound_pure_state}
\end{equation}

The cancellation of $\lambda$ shows explicitly that the final discrete-time bound is independent of the rate of the auxiliary Poisson clock. Intuitively, it can be thought as fixing the unit of continuous time.

It remains to extend the result from a definite initial state to an arbitrary initial probability distribution $\pi$. For each reachable internal state $i$, define the conditional first-tick mean and variance
\begin{equation}
    \mu_i := \mathbb E[\tau_1\mid X_0=i],
    \qquad \sigma_i^2 := \operatorname{Var}(\tau_1\mid X_0=i).
    \label{eq:state_conditioned_moments}
\end{equation}
The pure-state result \cref{eq:classical_bound_pure_state_shifted} implies that, for every reachable state $i$,
\begin{equation}
    \sigma_i^2+\mu_i \geq \frac{\mu_i^2}{d}.
    \label{eq:state_conditioned_variance_bound}
\end{equation}

For an arbitrary initial distribution $\pi$, the law of total expectation gives
\begin{equation}
    \mudt = \sum_i\pi_i\mu_i.
    \label{eq:mixture_mean}
\end{equation}
Similarly, the law of total variance~\cite{BertsekasTsitsiklis2008} gives
\begin{equation}
    \sigmadt^2 =
    \sum_i \pi_i \sigma_i^2 + \operatorname{Var}_{\pi}(\mu_i),
    \label{eq:mixture_variance}
\end{equation}
where
\begin{equation}
    \operatorname{Var}_{\pi}(\mu_i) := \sum_i\pi_i\mu_i^2 - \left( \sum_i\pi_i\mu_i \right)^2 \geq 0.
    \label{eq:variance_over_initial_state}
\end{equation}

Combining \cref{eq:mixture_mean,eq:mixture_variance}, we obtain
\begin{equation}
    \begin{split}
        \sigmadt^2 + \mudt &= \sum_i \pi_i (\sigma_i^2 + \mu_i) + \operatorname{Var}_{\pi}(\mu_i) \\
                           &\geq \frac{1}{d} \sum_i \pi_i \mu_i^2 + \operatorname{Var}_{\pi}(\mu_i),
    \end{split}
    \label{eq:mixture_bound_intermediate}
\end{equation}
where the inequality follows from \cref{eq:state_conditioned_variance_bound}. 
Using
\begin{equation}
    \sum_i \pi_i \mu_i^2 = \mudt^2 + \operatorname{Var}_{\pi}(\mu_i),
    \label{eq:mixture_second_moment_identity}
\end{equation}
we arrive at
\begin{equation}
    \begin{split}
        \sigmadt^2 + \mudt &\geq \frac{\mudt^2}{d} + \left( 1 + \frac{1}{d} \right) \operatorname{Var}_{\pi}(\mu_i)\\
        &\geq \frac{\mudt^2}{d}.
    \end{split}
    \label{eq:mixture_final_bound}
\end{equation}
Therefore,
\begin{equation}
    d\,\sigmadt^2 \geq \mudt (\mudt - d),
    \label{eq:classical_variance_bound_general_pi}
\end{equation}
for an arbitrary initial distribution $\pi$. This completes the proof of \cref{thm:classical_variance_bound}.

To show that the bound is tight, it is sufficient to verify that it is saturated by some explicit model.  As shown in \cref{subsec:erlang_ladder_clock}, the discrete Erlang-type ladder with transition parameter $q$ has first-tick moments
\begin{equation}
    \mudt = \frac{d}{q}, 
    \qquad \sigmadt^2 = \frac{d(1-q)}{q^2},
    \label{eq:erlang_ladder_moments_appendix}
\end{equation}
and consequently,
\begin{equation}
    d\,\sigmadt^2 = \mudt (\mudt - d),
    \label{eq:erlang_ladder_saturates_bound}
\end{equation}
The inequality of \cref{thm:classical_variance_bound} is thus saturated by this family.

\section{Derivation of the Hazard Bound}
\label{app:derivation_hazard_1}
Here, we prove \cref{eq:CF_hazard_bound_discrete_rewritten} from the main text. 
To start, let us we recall $f_L = \Pr(T>L)$, the probability of surviving the first \(L\) steps, with \(f_0=1\). 
Using the definition in \cref{eq:dt_hazard_def} of the main text, we can write the survival probability as
\begin{equation}
    f_L = \prod_{j=1}^{L} (1 - h_j).
    \label{eq:survival_from_hazard}
\end{equation}

Moreover, we can rewrite the definition for the mean tick time and the variance as,
\begin{align}
    \mudt &= \sum_{L=0}^{\infty}f_L,
    \label{eq:mean_survival_main} \\
    \sigmadt^2 &= 2\sum_{L=0}^{\infty} (L + 1) f_L - \mudt^2 - \mudt.
    \label{eq:variance_survival_main}
\end{align}
Consequently, the cost can be expressed as
\begin{equation}
    \mathrm{CF}_d = 2d \sum_{L=0}^{\infty} (L + 1) f_L - (d + 1)\mudt^2.
    \label{eq:CF_hazard_direct_rewritten}
\end{equation}
At fixed mean, the only remaining dynamical contribution is therefore the weighted tail \(\sum_L(L+1)f_L\).

To obtain a bound on the cost function, we may define the largest hazard attained along the conditioned trajectory as
\begin{equation}
    h_{\max} := \sup_{L\geq1} h_L.
    \label{eq:hmax_def_rewritten}
\end{equation}
Using $h_L = p_L / f_{L-1}$ from \cref{eq:dt_hazard_def}, and the definition of $\mudt$, we can write
\begin{equation}
    \mudt = \sum_{L=1}^{\infty} L f_{L-1}h_L,
    \label{eq:mu_hazard_rewritten}
\end{equation}
which is upper bounded by
\begin{align}
    \mudt &\leq h_{\max} \sum_{L=1}^{\infty}L f_{L - 1} \nonumber\\
          &= h_{\max} \sum_{L=0}^{\infty} (L + 1) f_L.
\end{align}
It follows that
\begin{equation}
    \sum_{L = 0}^{\infty} (L + 1) f_L \geq \frac{\mudt}{h_{\max}},
    \label{eq:tail_hazard_bound_rewritten}
\end{equation}
and therefore Eq.~\eqref{eq:CF_hazard_bound_discrete_rewritten} from the main text, which we repeat here for convenience:
\begin{equation}
    \mathrm{CF}_d \geq \frac{2d\,\mudt}{h_{\max}} - (d + 1) \mudt^2.
\end{equation}
Figure~\ref{fig:hazard_bound_illustration} illustrates the resulting constraint imposed by the maximal hazard. 
For fixed dimension \(d\) and mean waiting time \(\mudt\), a smaller value of \(h_{\max}\) raises the lower bound in \cref{eq:CF_hazard_bound_discrete_rewritten}.

\begin{figure}[t]
    \centering
    \includegraphics[width=\linewidth]{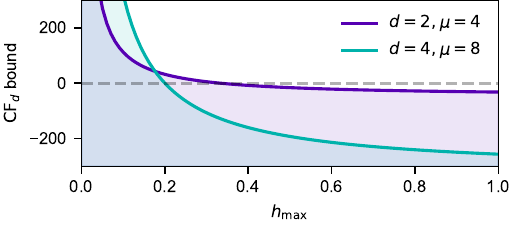}
    \caption{Illustration of the hazard-based lower bound on the discrete-time cost function, \(\mathrm{CF}_d \geq 2d \mudt / h_{\max} - (d + 1) \mudt^2\). 
    At fixed \(d\) and \(\mudt\), a uniformly smaller maximal hazard forces a larger weighted survival tail and therefore a larger cost. 
    The shaded region visualizes the forbidden region implied by the bound.}
    \label{fig:hazard_bound_illustration}
\end{figure}

\section{Computing the Cost Function using Lyapunov Equations}
\subsection{Computing $\mu$ and $\sigma^2$ in Discrete-Time}
\label{app:computing_mu_sigma_dt}
Let us consider the moments of the first tick distribution~\cite{Budroni2021Ticking-clock}.
Define the $Z$-transform of a sequence $g:\mathbb N\to \mathbb{R}$ by
\begin{equation}
    \ZZ[g](z) = \sum_{n=0}^\infty g(n) z^{-n} =: \tilde g(z),
\end{equation}
and the generating function
\begin{equation}
    Q(z) := \ZZ [p_L] (z).
\end{equation}
From \cref{eq:pL_fL} and the standard shift/difference identities of the $Z$-transform, one obtains
\begin{equation}
    Q(z) = -(1 - 1/z) \, \tilde f(z),
    \label{eq:Q_from_f}
\end{equation}
with
\begin{equation}
    \tilde f(z) := \ZZ [f_L] (z).
    \label{eq:tilde_f_def}
\end{equation}
Derivatives of $Q$ at $z=1$ generate the moments of the first-tick distribution. 
In particular,
\begin{equation}
    \begin{aligned}
        \mu &:= \sum_{L\ge1} L\,p_L = -\left. \frac{\partial Q(z)}{\partial z} \right|_{z = 1},\\
        \delta^2 &:= \sum_{L \ge 1} L^2 p_L 
        = \left. \frac{\partial^2 Q(z)}{\partial z^2} \right|_{z = 1} 
        + \left. \frac{\partial Q(z)}{\partial z} \right|_{z = 1},
    \end{aligned}
    \label{eq:moments_from_Q}
\end{equation}
and the variance is $\sigma^2 = \delta^2 - \mu^2$. 
Combining \cref{eq:Q_from_f,eq:moments_from_Q} yields~\cite{Budroni2021Ticking-clock}:
\bea
    \mu &= \tilde f(1), \\
    \sigma^2 &= -\mu (\mu - 1) - 2\tilde f'(1).
    \label{eq:moments_from_f}
\eea

Our figures of merit can be obtained from the resolvent of the CP map $\mathcal I_0$.
From \cref{eq:tilde_f_def} one has
\begin{equation}
    \tilde f(z) = \tr \left[(\idmap - \tfrac1z\mathcal I_0)^{-1}(\varrho)\right].
    \label{eq:ftilde_resolvent}
\end{equation}
Let $X_1(z) := (\idmap - \tfrac1z\mathcal I_0)^{-1}(\varrho)$. 
Then $X_1(z)$ is solution of the discrete-time Lyapunov equation
\begin{equation}
    \varrho = X_1(z) - \frac{1}{z} \mathcal{I}_0 (X_1(z)) .
    \label{eq:Lyap1}
\end{equation}
In particular, evaluating at $z=1$ gives
\begin{equation}
    \mu = \tilde f(1) = \tr \big[X_1(1)\big].
    \label{eq:mu_Lyap}
\end{equation}

To compute $\tilde f'(1)$ (hence $\sigma^2$) without explicitly differentiating $X_1(z)$, we use the identity
\begin{equation}
    \tilde f'(1) = \mu - \tr(X_2),
\end{equation}
with
\begin{equation}
    X_2 := (\idmap - \mathcal I_0)^{-2}(\varrho),
\end{equation}
where $X_2$ can be obtained \emph{by nesting} a second Lyapunov solve using $X_1(1)$ as input:
\begin{equation}
    X_1(1) = X_2 - \mathcal{I}_0 (X_2).
    \label{eq:Lyap2}
\end{equation}
Combining these relations with \cref{eq:moments_from_f} yields:
\bea
    \mu &= \tr(X_1),\\
    \sigma^2 &= -\mu (\mu + 1) + 2\tr(X_2),
    \label{eq:sigma_Lyap}
\eea
with $X_1, X_2$ obtained respectively from \cref{eq:Lyap1} at $z = 1$ and \cref{eq:Lyap2}.

The discrete-time cost function can then be written as
\begin{equation}
    \mathrm{CF}_d = 2d \tr (X_2) - (d + 1) \mudt^2
\end{equation}

In practice, one may eliminate $X_1$ and $X_2$ completely. 
For every admissible $K_0$ such that the resolvent exists, the matrices $X_1$ and $X_2$ are uniquely determined by the nested Lyapunov equations
\begin{equation}
    \varrho = X_1 - K_0 X_1 K_0^\dagger, 
    \qquad X_1 = X_2 - K_0 X_2 K_0^\dagger.
\end{equation}
Hence, both the mean and the cost function become functions of $K_0$ alone. 
The fixed-mean problem may then be written as
\begin{equation}
    \min_{K_0}\ \mathrm{CF}_d (K_0) 
    \qquad \text{subject to} 
    \qquad \mu(K_0) = \bar \mu, 
    \quad K_0^\dagger K_0 \le \id.
    \label{eq:reduced_fixed_mu_problem}
\end{equation}
This reduced formulation is particularly convenient for direct numerical optimization, because each evaluation of the objective and of the equality constraint only requires two discrete Lyapunov solves.

\subsection{Computing $\mu$ and $\sigma^2$ in Continuous Time}
\label{app:ct_lyapunov}
In continuous time, analogously as above, the moments of the tick distribution can be written directly in terms of the generator $\mathcal L_0$ of the evolution equation using \cref{eq:f(t)_CT_def}:
\begin{equation}
    \mu_n = (-1)^n n!\,\tr \left[\mathcal L_0^{-n} \varrho \right].
    \label{eq:ct_moment_formula}    
\end{equation}
We can determine the moments recursively by solving the continuous-time Lyapunov equation~\cite{Dost2023,Meier2025a},
\begin{equation}
    \varrho = \mathcal L_0 X_1^{(\rm ct)}, 
    \qquad X_1^{(\rm ct)} = \mathcal L_0 X_2^{(\rm ct)},
    \label{eq:ct_X12_def}
\end{equation}
first for $X_1^{(\rm ct)}$ and then for $X_2^{(\rm ct)}$.
Those are the continuous-time counterparts of the matrices $X_1$ and $X_2$ from \cref{eq:Lyap1,eq:Lyap2} in the discrete-time case.
Thus, the mean and variance of the tick time are then computed by exactly the same nested equations as in the discrete model:
\begin{equation}
    \mu = \tr\left[X_1^{(\rm ct)}\right],
    \qquad \sigma^2 = 2\tr \left[X_2^{(\rm ct)}\right] - \mu^2.
    \label{eq:ct_mu_sigma_X}
\end{equation}

\section{Parametrization and Numerical Structure of the Discrete-Time Ansatz}
\label{app:discrete_numerical_ansatz}
In this appendix, we give the explicit finite-dimensional parametrization used for the discrete-time optimization in \cref{subsec:numerical_optimization_discrete_time} and summarize the structure observed in the optimized solutions.
The ansatz combines coherent internal evolution with a pure, real initial state and a complex binary unsharp detector.
We first describe the reduced rank-one family used for the results in the main text. 
We then explain how this family emerges from broader searches over the detector effect, before discussing the additional simplifications and symmetries present for a qubit.

\subsection{Finite-dimensional rank-one parametrization}
\label{app:disc_opt_parametrization}
\textit{Coherent evolution.}---%
We work in the eigenbasis $\{\ket{k}\}_{k=0}^{d-1}$ of the unitary applied during one no-tick step. 
Equivalently, if the evolution is generated by a time-independent Hamiltonian $H\ket{k} = E_k \ket{k}$ over an interval $\tau_0$, then
\begin{equation}
    U_0 = e^{-iH\tau_0}
        =\ketbra{0} + \sum_{k=1}^{d-1}e^{iu_k}\ketbra{k},
    \label{eq:app_U0_parametrization}
\end{equation}
with $u_k = - \bigl(E_k - E_0 \bigr) \tau_0 \pmod{2\pi}$ being the variables used in the optimization. 
An overall phase of $U_0$ has no physical effect and has been removed in \cref{eq:app_U0_parametrization}. 
The coherent part of the ansatz therefore contributes $d - 1$ independent phases.
\begin{figure}[t]
    \centering 
    \includegraphics[width=\linewidth]{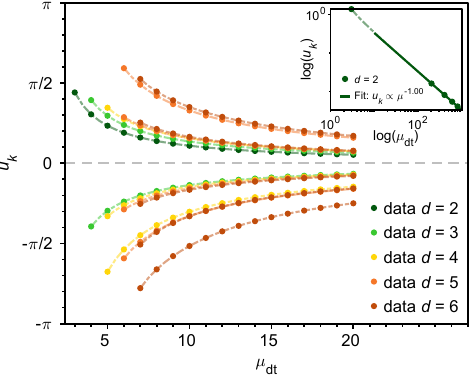}
    \caption{Numeric result of the optimal coherent evolution parameter $u_k$ scaling with $\mudt$ in a single run for dimensions $d=2,\dots,6$, (cf.~\cref{eq:app_U0_parametrization}).
    For each dimension, there are $d - 1$ parameters $u_k$, due to $k = 1, \dots, d-1$.
    The numerical results from several runs show that there is always a symmetry between the possible achievable values, meaning there are multiple equivalent optimal choices for the set $\{u_k\}_k$.
    Here, we show one such choice.
    In all instances $|u_k| \propto \mudt^{-1}$ in the limit of large mean $\mudt$, as exemplarily shown for $d = 2$ in the inset.}
    \label{fig:UniPhase}
\end{figure}

\textit{Initial state.---}%
We take the initial state to be pure,
\begin{equation}
    \varrho_{\mathrm{in}} = \ketbra{\psi}.
\end{equation}
In a $d$-dimensional Hilbert space, a generic normalized vector can be written in hyperspherical coordinates as
\begin{equation}
    \begin{split}
        \ket{\psi(\bm{\theta},\bm{\phi})}
        = {}& \sum_{k=0}^{d-2} e^{i\phi_k} \left( \prod_{j=1}^{k} \sin\theta_j \right) \cos\theta_{k+1} \ket{k}\\
            & + e^{i\phi_{d-1}} \left( \prod_{j=1}^{d-1} \sin\theta_j \right) \ket{d-1},
    \end{split}
    \label{eq:app_psi_d}
\end{equation}
where $\theta_j\in[0,\pi/2]$, $\phi_j\in[0,2\pi)$, the empty product equals one, and we fix the irrelevant global phase by setting $\phi_0 = 0$. 
Thus, a pure state is specified by $2(d-1)$ real parameters. 
The detector direction $\ket{\Psi}$ is parametrized in the same way, using an independent pair of angle vectors,
\begin{equation}
    \ket{\Psi} = \ket{\psi(\bm{\beta}, \bm{\gamma})}.
\end{equation}
Note by symmetry, one of the states (either the initial state or the detector direction) can be chosen to be real, thereby eliminating $d-1$ phase parameters from its description.

\textit{Rank-one detector and no-tick operation.}---%
As in \cref{eq:rank_one_tick_effect}, the tick effect is taken to be rank one,
\begin{equation}
    \begin{split}
        M_1 &= (1 - q) \ketbra{\Psi},\\
        M_0 &= \id - M_1 = q \ketbra{\Psi} + \bigl(\id - \ketbra{\Psi} \bigr).
    \end{split}
    \label{eq:app_rank_one_effects}
\end{equation}
Here $0 \leq q \leq 1$ is the no-tick eigenvalue along the detectable direction and $1 - q$ is the strength of a single detection attempt,
and we have
\begin{equation}
    \sqrt{M_0} = \sqrt q\,\ketbra{\Psi} + \bigl(\id - \ketbra{\Psi} \bigr)
               = \id - (1 - \sqrt q) \ketbra{\Psi} .
    \label{eq:app_sqrt_M0}
\end{equation}

Thus, the overall no-tick Kraus operator and operation are
\begin{equation}
    K_0=U_0\sqrt{M_0},
    \qquad
    \mathcal I_0(\varrho)=K_0\varrho K_0^\dagger.
    \label{eq:app_parametrized_notick_operation}
\end{equation}

For the rank-one effect in \cref{eq:app_rank_one_effects}, the discrete hazard rate reduces to the evolving overlap with a single detector direction,
\begin{equation}
    h_L = 1-\frac{f_L}{f_{L-1}}
        = \tr\!\left[M_1\varrho_{L-1}\right]
        = (1 - q) \bra{\Psi} \varrho_{L-1} \ket{\Psi}.
    \label{eq:app_hazard_parametrized}
\end{equation}

In the end, the reduced ansatz contains
\begin{equation}
    (d - 1) + 3(d - 1) + 1 = 4d - 3
    \label{eq:app_parameter_count}
\end{equation}
real variational parameters: the $d-1$ relative eigenphases of $U_0$, the $3(d - 1)$ parameters for a generic real pure state and complex detection direction, and the attenuation parameter $q$.
Although the chosen basis diagonalizes $U_0$, no restriction is thereby imposed on the relative orientation of the initial state and the detector, since both are varied freely in that basis.

Figures~\ref{fig:UniPhase} and~\ref{fig:MeasurStrngth} show the numerical results of the coherent evolution parameters $u_k$ and the measurement strength
$1-q$ respectively, showing how they vary as a function of the mean tick time $\mudt$, in particular with an inverse linear scaling for large
$\mudt$.

\subsection{Emergence of the rank-one detector}
\label{app:disc_opt_rank_one_emergence}
The rank-one form  of detection in \cref{eq:app_rank_one_effects} can also be motivated directly from the broader numerical searches. 
To test the spectral structure of the detector, consider first a general no-tick effect written in its eigenbasis,
\begin{equation}
    M_0 = \sum_{i=1}^{d} q_i \ketbra{\Psi_i}, 
    \qquad 0 \leq q_i \leq 1,
    \label{eq:app_general_spectral_effect}
\end{equation}
where $\ket{\Psi_i}$, for $i = 1, \dots, d$ is an orthonormal basis. 
For low dimensions ($d=2,3$ and $4$), we investigated the optimal such higher rank detectors.
In all those instances, the optimization over the detector eigenvalues consistently drove the solutions to the boundary pattern
\begin{equation}
    q_i =
    \begin{cases}
        q(\bar\mu), & i = k,\\
        1,          & i \neq k,
    \end{cases}
    \qquad k \in \{1, \ldots, d\}.
    \label{eq:app_rank_one_spectrum_observation}
\end{equation}
Thus, for the optimization in higher dimensions, we use
\begin{equation}
    \begin{split}
        M_1 &= \Big(1-q(\bar\mu)\Big)\ketbra{\Psi_k},\\
        M_0 &= \id - \Big(1-q(\bar\mu)\Big)\ketbra{\Psi_k},
    \end{split}
\end{equation}
which is precisely the reduced ansatz from \cref{eq:app_rank_one_effects}. 
The choice of the index $k$ carries no physical significance, since permutations of the measurement eigenbasis produce equivalent parametrizations. 
This observation is numerical rather than an analytic proof that rank-one detection is globally optimal among all quantum instruments; it is the empirical justification for the restricted search used in the main text.

An intermediate parametrization, useful in the qubit search and in tests with an isotropic orthogonal subspace, is
\begin{equation}
    M_0 = q_{\parallel} \ketbra{\Psi} + q_{\perp} \bigl(\id - \ketbra{\Psi} \bigr),
    \qquad q_{\parallel}, q_{\perp} \in [0,1].
    \label{eq:app_two_eigenvalue_effect}
\end{equation}
Before imposing the rank-one structure, this family has  $4d - 2$ real parameters. 
The branch $q_{\perp} = 1$ and $q_{\parallel} = q$ reduces to \cref{eq:app_rank_one_effects}. 
For $d > 2$, the opposite boundary branch $q_{\parallel} = 1$, $q_{\perp} < 1$ would instead give a rank-$(d - 1)$ tick effect and should not be identified with the rank-one branch. 
For a qubit, however, both eigenspaces are one dimensional and the two branches are equivalent after relabelling the detector direction.

\subsection{Large-mean regime and continuous-time scaling}
\label{app:disc_opt_large_mean}

\begin{figure}[t]
    \centering
    \includegraphics[width=\linewidth]{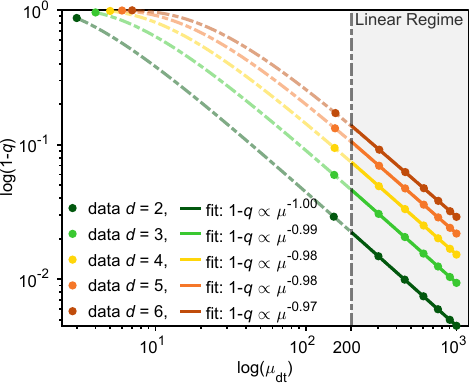}
    \caption{Numerically optimized detector strength \(1-q\) as a
    function of the prescribed mean tick time \(\bar\mu\), for
    dimensions \(d=2,\ldots,6\). In the large-\(\bar\mu\) regime, the
    data approach the inverse scaling \(1-q\propto\bar\mu^{-1}\)
    expected on the branch converging to a continuous-time rank-one
    detector.}
    \label{fig:MeasurStrngth}
\end{figure}

As the prescribed mean \(\bar\mu\) increases, the optimized detector
becomes progressively weaker. Numerically, its strength obeys
\begin{equation}
    1-q(\bar\mu)\propto\bar\mu^{-1},
    \label{eq:app_detector_strength_scaling}
\end{equation}
in agreement with \cref{eq:detector_strength_scaling}, as shown in
\cref{fig:MeasurStrngth}.

The coherent evolution displays a corresponding weak-step scaling.
Because the eigenphases \(u_k\) are defined only modulo \(2\pi\), let
\(\widetilde u_k(\bar\mu)\in\mathbb{R}\) denote representatives chosen
continuously along one optimal branch and such that
\(\widetilde u_k(\bar\mu)\to0\) as \(\bar\mu\to\infty\). After fixing
the irrelevant common phase, setting \(\widetilde u_0=0\), and
reordering the energy eigenvectors, the numerical solutions in
\cref{fig:UniPhase} are, for large $\mudt$, approximately described by $\widetilde u_k(\bar\mu)=O(\bar\mu^{-1})$ with
\begin{equation}
    \widetilde u_{k+1}(\bar\mu)-\widetilde u_k(\bar\mu)
    \simeq \frac{\Delta_d}{\bar\mu},
    \label{eq:app_unitary_phase_scaling}
\end{equation}
where \(\Delta_d\) is a dimension-dependent spacing and $k=0,\ldots,d-2$. Thus, we see empirically that the
rescaled phases \(\bar\mu\,\widetilde u_k\) approach, to numerical
accuracy, an approximately arithmetic pattern. 

The simultaneous scalings in
\cref{eq:app_detector_strength_scaling,eq:app_unitary_phase_scaling}
are the discrete signature of the continuous-time limit. Introduce a
step duration \(\delta\), keep the physical mean
\(\mu_*=\delta\bar\mu\) finite, and write
\begin{equation}
    U_0=e^{-iH\delta},
    \qquad
    q=1-\gamma\delta+O(\delta^2).
\end{equation}
For the phase representatives that vanish with \(\delta\),
\begin{equation}
    \widetilde u_k
    =-\bigl(E_k-E_0\bigr)\delta .
    \label{eq:app_phase_energy_continuous}
\end{equation}
Consequently, the approximate arithmetic pattern of the rescaled
phases corresponds, up to an overall energy shift and time rescaling,
to an approximately equally spaced spectrum of the limiting
Hamiltonian, which we also observe directly in the continuous limit, cf.~\cref{fig:Fig_OptCT}(a). 
Moreover, using \cref{eq:app_sqrt_M0}, the no-tick Kraus
operator satisfies
\begin{equation}
    \begin{split}
        K_0
        &=U_0\sqrt{M_0}\\
        &=\id-iH\delta
          -\frac{\gamma\delta}{2}\ketbra{\Psi}
          +O(\delta^2).
    \end{split}
    \label{eq:app_K0_continuous_expansion}
\end{equation}
Thus, keeping a finite physical mean while
\(\bar\mu\sim\delta^{-1}\) requires both
\(1-q=O(\bar\mu^{-1})\) and
\(\widetilde u_k=O(\bar\mu^{-1})\).
Equation~\eqref{eq:app_K0_continuous_expansion} connects the optimized
discrete ansatz directly to the non-Hermitian no-tick generator
parametrized in \cref{appendix:parametrization_ct_model}.

The empirical large-mean behavior of the cost can be summarized by
writing, for fixed \(d\),
\begin{equation}
    \mathrm{CF}^{\mathrm{opt}}_d(\bar\mu)
    =-c_d\bar\mu^2+O(\bar\mu).
    \label{eq:app_cost_large_mean}
\end{equation}
Using the identities in \cref{eq:sigma_Lyap} together with the
definition of the cost function gives
\begin{equation}
    \begin{split}
        \sigmadt^2
        &=
        \frac{1-c_d}{d}\,\bar\mu^2+O(\bar\mu),\\
        \tr(X_2)
        &=
        \frac{d+1-c_d}{2d}\,\bar\mu^2+O(\bar\mu).
    \end{split}
    \label{eq:app_asymptotic_linked_fits}
\end{equation}
Hence the leading quadratic coefficients of the cost, variance, and
survival weight are fixed by the single number \(c_d\). Together with
\cref{eq:app_detector_strength_scaling,eq:app_unitary_phase_scaling},
this captures the relevant numerical trend: both the coherent rotation
and the detection probability per step vanish as
\(O(\bar\mu^{-1})\), while their accumulation over
\(O(\bar\mu)\) steps produces a finite conditional dynamics that
concentrates the tick within a narrow relative time window.

\subsection{Qubit specialization and symmetries}
\label{subsec:qubit_case}
For $d = 2$, it is convenient to replace the hyperspherical angles by the usual Bloch-sphere polar angles. 
We write
\begin{equation}
    \begin{split}
        \ket{\psi} &= \cos\!\left(\frac{\theta_{\psi}}{2}\right)\ket{0} + \sin\!\left(\frac{\theta_{\psi}}{2}\right)\ket{1},\\
        \ket{\Psi} &= \cos\!\left(\frac{\theta_{\Psi}}{2}\right)\ket{0} + e^{i\phi_{\Psi}} \sin\!\left(\frac{\theta_{\Psi}}{2}\right)\ket{1},
    \end{split}
    \label{eq:app_qubit_states}
\end{equation}
with
\(\theta_{\psi},\theta_{\Psi}\in[0,\pi]\)
and
\(\phi_{\Psi}\in[0,2\pi)\). For the unitary we write
\begin{equation}
    U_0 = \ketbra{0} + e^{iu} \ketbra{1}.
    \label{eq:app_qubit_unitary}
\end{equation}
The Bloch polar angles in \cref{eq:app_qubit_states} are twice the corresponding hyperspherical angles in \cref{eq:app_psi_d}. 
The reduced qubit ansatz is therefore described by the five parameters
\begin{equation}
    \bm\xi_5 = \left( \theta_{\psi}; \theta_{\Psi}, \phi_{\Psi}; u; q \right).
    \label{eq:app_qubit_parameter_vector}
\end{equation}

For completeness, the two-eigenvalue detector \cref{eq:app_two_eigenvalue_effect} has a familiar Bloch-vector form. 
With $\ketbra{\Psi} = (\id + \hat{\bm n} \cdot \bm\sigma)/2$, one finds
\begin{equation}
    M_0 =\frac{1}{2} \left[ (q_{\parallel} + q_{\perp}) \id + (q_{\parallel} - q_{\perp}) \hat{\bm n} \cdot \bm\sigma \right].
    \label{eq:app_qubit_bloch_effect}
\end{equation}
Comparing this with $M_0 = [(1 + b) \id + \eta\,\hat{\bm m} \cdot \bm\sigma]/2$ gives
\begin{equation}\label{eq:appbetaabsb}
    b = q_{\parallel} + q_{\perp} - 1, 
    \qquad \eta = \lvert q_{\parallel} - q_{\perp}\rvert, 
    \qquad \lvert b\rvert \leq 1 - \eta ,
\end{equation}
with
\(\hat{\bm m}
=\operatorname{sgn}(q_{\parallel}-q_{\perp})\hat{\bm n}\).
The last condition in \cref{eq:appbetaabsb} is automatically satisfied when $q_{\parallel}, q_{\perp} \in [0,1]$. 
The optimized boundary solution, for which one eigenvalue equals one, saturates this positivity inequality. 
The unrestricted qubit search based on $q_{\parallel}$ and $q_{\perp}$ contains six parameters; after exploiting the observed boundary structure it reduces to \cref{eq:app_qubit_parameter_vector}.

The qubit optimizations also display several equivalent branches. 
Numerically, the optimized solutions satisfy
\(\theta_\psi=\theta_\Psi=\pi/2\).
Finally, complex conjugation maps
\begin{equation}
    (u,\phi_\Psi) \longmapsto (-u, -\phi_\Psi) \pmod{2\pi}
\end{equation}
without changing any first-tick probability. 
This accounts for the conjugate optimal branches that are often displayed numerically as $u \leftrightarrow 2\pi - u$.

The two-parameter qubit clock analyzed in Ref.~\cite{Budroni2021Ticking-clock} is contained in this parametrization as a restricted subfamily. 
Its analytic cost,
\begin{equation}
    \mathrm{CF}_{2}^{(\mathrm{ref})}(\mudt) = -\frac{\mudt^2 (\mudt - 2)^2}{2  (\mudt - 1)^2}, 
    \qquad \mudt > 2,
    \label{eq:app_reference_qubit_cost}
\end{equation}
provides a useful check of the numerical implementation. 
It should not, however, be identified with the optimum over the full ansatz: the additional degrees of freedom retained here yield the stronger asymptotic qubit optimum derived in \cref{subsec_analytical_d2}.

\section{Analytical Bound for the Two-Dimensional Case}
\label{subsec_analytical_d2}
To validate our numerical findings, the optimal counting process can be solved analytically for the two-dimensional case ($d = 2$). 
Once again, we model the discrete quantum counting process using a single Kraus operator $K_0 = U_0 \sqrt{M_0}$ with modulus $M_0 = \operatorname{diag}(1, q)$. 
Any such single-Kraus no-tick operation can be transformed into an upper-triangular Schur form $T = Z^\dagger K_0 Z$ via a unitary matrix $Z$. 
Because we consider an arbitrary pure initial state, this unitary transformation simply corresponds to a change of basis that can be absorbed into the state preparation, effectively allowing us to set $Z = \id$. 

The mean waiting time $\mudt$ and the discrete-time cost function $\mathrm{CF}_2$ are then completely determined by the traces of the solutions to the nested discrete Lyapunov equations
\begin{equation}
    X_1 - T X_1 T^\dagger = \tilde{\varrho}, 
    \qquad X_2 - T X_2 T^\dagger = X_1
\end{equation}
yielding $\mudt = \tr(X_1)$ and $\mathrm{CF}_2 = 4\,\tr(X_2) - 3 \mudt^2$, where $\tilde{\varrho}$ is the initial pure state in the Schur basis, parameterized by a population angle $\alpha$ and a relative phase $\delta$.

Fixing the global phase, the invariant determinant and Frobenius norm of the physical contraction $T$ strictly constrain its non-normal coupling. 
This allows us to parameterize the operator exactly as a function of its hyperspherical loss angles $\gamma_1, \gamma_2$ and internal coherent phases $\phi, \psi$
\begin{equation}
    T = 
    \begin{pmatrix} 
        \cos\gamma_1 & \sin\gamma_1 \sin\gamma_2 \,e^{i\psi}\\ 
        0            & \cos\gamma_2 \,e^{i\phi}
    \end{pmatrix}.
\end{equation}
Alongside this operator, the arbitrary initial pure state is explicitly defined in the Schur basis as
\begin{equation}
    \tilde{\varrho} = 
    \begin{pmatrix} 
        \cos^2 \alpha & \cos\alpha \sin\alpha \,e^{-i\delta}\\ 
        \cos\alpha \sin\alpha \,e^{i\delta} & \sin^2 \alpha 
    \end{pmatrix}.
\end{equation}
In the evaluation of both traces, the phases $\psi$ and $\delta$ always appear together as $\psi + \delta$. 
For this reason, we can drop $\delta$ without loss of generality. 
In this representation, the measurement attenuation parameter is precisely given by $q = \cos^2 \gamma_1 \cos^2 \gamma_2$.

While the exact traces of $X_1$ and $X_2$ can be obtained analytically for any arbitrary parameters, accessing the large mean waiting time regime ($\mudt \gg 1$) requires the attenuation parameters to asymptotically approach zero. 
By assuming a symmetric attenuation $\gamma_1 = \gamma_2 = \gamma$, we can expand the analytical traces to leading order in $\gamma$. 
This expansion reveals a critical structural requirement to access the optimal regime where $\mathrm{CF}_2 < 0$. 
One can show that for the optimal protocol, the relative coherent phase $\phi$ must scale quadratically with the measurement attenuation as $\phi = c \gamma^2$.

\begin{figure*}
    \centering
    \includegraphics[width=\linewidth]{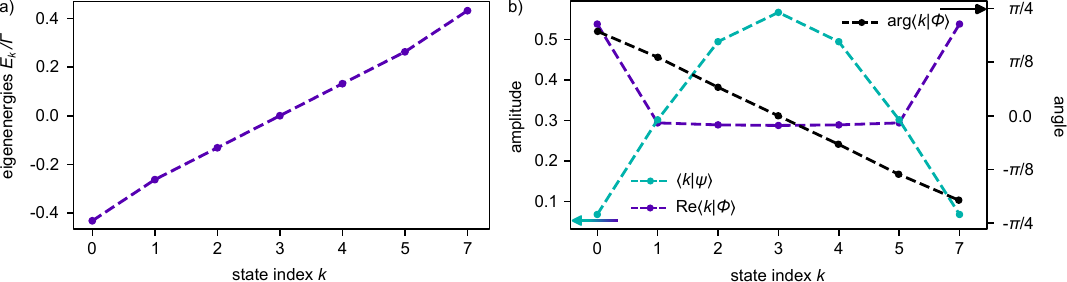}
    \caption{Optimal parameters for continuous-time for $d = 7$.
    a) The energy eigenvalues are shown in units of the tick decay rate.
    b) Overlap of the initial state and tick state with the energy eigenstates are shown on the left scale.
    The complex argument (in radian) is shown on the right scale.}
    \label{fig:Fig_OptCT}
\end{figure*}

Keeping only the leading order terms, the traces factor into scale-invariant numerators
\begin{equation}
    \tr(X_1) \approx \frac{N_1 (\alpha, \psi, c)}{\gamma^2}, 
    \qquad \tr(X_2) \approx \frac{N_2 (\alpha, \psi, c)}{\gamma^4}
\end{equation}
where the analytical coefficients evaluate to
\begin{align}
    N_1(\alpha, \psi, c) &= 1 + \frac{2\sin^2\alpha}{1 + c^2} + \frac{\cos\psi - c \sin\psi}{1 + c^2} \sin(2\alpha)\\
    N_2(\alpha, \psi, c) &= 1 + \frac{2(3 + c^2)}{(1 + c^2)^2} \sin^2\alpha \nonumber\\
    &+ \frac{2\cos\psi - c(3 + c^2) \sin\psi}{(1 + c^2)^2} \sin(2\alpha)
\end{align}

So we can optimize the scale-invariant ratio $\mathrm{CF}_2 / \mudt^2$. The attenuation parameter $\gamma$ completely drops out, reducing the problem to a purely algebraic minimization over the internal parameters
\begin{equation}
    \lim_{\mudt \to \infty} \left( \frac{\mathrm{CF}_2}{\mudt^2} \right) 
    = \frac{4 N_2 (\alpha, \psi, c)}{\big(N_1 (\alpha, \psi, c)\big)^2} - 3
\end{equation}

Minimizing this ratio yields a global asymptotic bound of exactly $-0.5982$. 
This minimum is attained at $\alpha \approx 55.67^\circ$, $\psi \approx 23.87^\circ$, and a phase-scaling coefficient of $c \approx -1.0359$. 
This strictly confirms the quantum advantage over the classical finite-memory benchmark in two dimensions and perfectly corroborates the asymptotic scaling observed in our numerical optimization.

\section{Parametrization of the Continuous-Time Model}
\label{appendix:parametrization_ct_model}
In this appendix, we discuss how the continuous-time counting process is parameterized for the numerical optimization of precision $\mathcal N$.

For this purpose, the model can be parametrized in the energy eigenbasis. 
That is, we may assume $H$ to be diagonal with the eigenvalues $E_1, \dots, E_d \in \mathbb R$ as free parameters, where $d$ is the Hilbert space dimension.
The initial state of the clock can be parametrized in the energy eigenbasis as
\begin{equation}
    \ket{\psi} = \sum_{i=1}^d \psi_i \ket{i}.
\end{equation}
We may assume that all $\psi_i \in \mathbb R$ are real.
Otherwise, we could always write $\ket{\psi} = e^{iD} |\tilde\psi \rangle$, with $\tilde\psi_i \in \mathbb R$ and $D$ a diagonal matrix in the energy eigenbasis of $H$.
The transformed evolution generator then reads
\begin{equation}
    e^{iD} \left(H - \frac{i}{2} \ketbra{\Phi}{\Phi} \right) e^{-iD} 
    = H - \frac{i}{2} e^{iD} \ketbra{\Phi}{\Phi} e^{-iD}.
    \label{eq:phases}
\end{equation}
Hence, we can fix the initial state to be purely real, $\psi_i \in \mathbb R$ at the cost of the tick state $\ket{\Phi}$ in general being complex.

Let us thus parametrize the tick state with
\begin{equation}
    \ket{\Phi} = \sum_{i=1}^d \Phi_i e^{i\theta_i} \ket{i}.
\end{equation}
The parameters are real, i.e., $\Phi_i, \theta_i \in \mathbb R$.
Moreover, we may assume without loss of generality that $\theta_1 = 0$, as we can always subtract global phases, additionally to the symmetry in \cref{eq:phases}.
This also implies that we can shift the evolution generator by an arbitrary constant; here, we fix $\sum_{i=1}^d E_i = 0$.

Including normalization constraints, $1 = \sum_{i=1}^d \Phi_i^2$ for the tick state, $1 = \sum_{i=1}^d \psi_i^2$ for the initial state, the $d$-dimensional problem thus has in total
\begin{equation}
    4(d - 1) \text{ real parameters } \Phi_i, \theta_i, \psi_i, E_i
\end{equation}
Note that, in principle, there would be $+1$ additional parameter that sets the overall time scale of the problem, and thus the scale of $\muct$ and $\sigmact$.
This additional parameter drops out in continuous time, however, due to the time-scale invariance of $\mathcal N$, resulting in one less free parameter than the discrete-time problem.
In Fig.~\ref{fig:Fig_OptCT}, we show the parameters obtained from numerically optimizing $\mathcal N$ for the exemplary case of $d = 7$.
The optimal parameters qualitatively resemble the parameters proposed in~\cite{Woods2022,Dost2023}.

\bibliography{ref.bib}

\end{document}